\newtheorem{theorem}{Theorem}
\newtheorem{Lemma}{Lemma}
\title{Exploring the Limitations of Structured Orthogonal Dictionary Learning}
\author{
  \begin{tabular}{cc}
    Anirudh Dash & Aditya Siripuram \\
    \small Department of Electrical Engineering & \small Department of Electrical Engineering \\
    \small Indian Institute of Technology, Hyderabad & \small Indian Institute of Technology, Hyderabad \\
    \small \texttt{ee21btech11002@iith.ac.in} & \small \texttt{staditya@ee.iith.ac.in}
  \end{tabular}
}
\date{}
\newcommand{\keywords}[1]{\textbf{Keywords:} #1}
\begin{document}
\maketitle

\begin{abstract}
    This work is motivated by recent applications of structured dictionary learning, in particular when the dictionary is assumed to be the product of a few Householder atoms. We investigate the following two problems: 1) How do we approximate an orthogonal matrix $\mathbf{V}$ with a product of a specified number of Householder matrices, and 2) How many samples are required to learn a structured (Householder) dictionary from data? For 1) we discuss an algorithm that decomposes $\mathbf{V}$ as a product of a specified number of Householder matrices. We see that the algorithm outputs the decomposition when it exists, and give bounds on the approximation error of the algorithm when such a decomposition does not exist. For 2) given data $\mathbf{Y}=\mathbf{HX}$, we show that when assuming a binary coefficient matrix $\mathbf{X}$, the structured (Householder) dictionary learning problem can be solved with just $2$ samples (columns) in $\mathbf{Y}$.
\end{abstract}

\keywords{Approximating with few householders, low sample complexity, structured dictionary}

\section{Introduction and Related work}
\label{sec:intro}

The dictionary learning problem is as follows: given a data matrix \(\mathbf{Y} \in \mathbb{R}^{n \times p}\), the objective is to find a dictionary of atoms \( \mathbf{D} \in \mathbb{R}^{n \times m}\) and a coefficient matrix \(\mathbf{X} \in \mathbb{R}^{m \times p}\) such that \(\mathbf{Y}=\mathbf{DX}\). Generally, $m$ is chosen to be greater than $n$, and the redundancy is exploited to recover the dictionary. This problem has been well studied in literature (see \cite{olshausen1997sparse, engan1999method, aharon2006k, mairal2009online, sun2015complete} and this survey paper \cite{du2023matrix}). The orthogonal dictionary learning problem imposes the additional constraint: $m=n$. Several techniques have been developed to solve this problem, such as gradient descent \cite{gilboa2019efficient}, alternate minimization \cite{lesage2005learning, bao2013fast, liang2022simple}, $\ell^1$ minimization-based methods, and even using the $\ell^4$ norm to obtain theoretical guarantees for recovery of the orthogonal dictionary  \cite{zhai2020complete}. Online techniques have also been explored for orthogonal dictionary learning \cite{yeganli2014improved}. Recently, there has been an emphasis on fast dictionary learning, achieved by imposing additional structure on the dictionary by employing Householder reflections and Givens rotations \cite{rusu2016fast, rusu2017learning}.
For example, work in  \cite{rusu2016fast} constructs the dictionary as a product of $O(\text{log}n)$ Householder matrices (Recall that a householder matrix is given by $\mathbf{H} = \mathbf{I}-2\mathbf{uu}^\textsf{T}$ for a unit norm vector $\mathbf{u}$). Prior work by the authors in \cite{dash2024fast} gives a bound on the number of columns in $\mathbf{Y}$ that is required to recover a Householder dictionary $\mathbf{H}$ from $\mathbf{Y} = \mathbf{H X}$ under a statistical model on the coefficient matrix $\mathbf{X}$. Motivated by these developments, we hope to investigate the following questions in this work

\begin{enumerate}
    \item \emph{Richness of the product of a few Householder matrices:} How well can an orthogonal matrix be approximated by a product of a few Householder matrices? Recall that every $n \times n$ orthogonal matrix can be written as a product of $n$ Householder matrices \cite{uhlig2001constructive}. However, we can attempt to approximate an orthogonal matrix with the product of a few (preferably $O(\log n)$) Householder matrices. Such an approximation may have advantages in both computation speed and memory requirements to store the matrix.

    \item \emph{Lower bounds on sample complexity:} Similar to \cite{rusu2016fast, dash2024fast} Consider the dictionary learning problem as $\mathbf{Y}=\mathbf{HX}$ discussed earlier, where $\mathbf{H}$ is a Householder matrix. Are there any lower bounds on the sample complexity to recover $\mathbf{H}$ and $\mathbf{X}$ from $\mathbf{Y}$?
\end{enumerate}

As a first step towards solving these problems, we discuss 1) An algorithm that provably verifies if an input orthogonal matrix is a product of a certain number of Householder matrices and outputs the corresponding factors; and 2) A proof that when the coefficient matrix $\mathbf{X}$ is known to be binary, only two columns in the data matrix $\mathbf{Y}$ are sufficient to reconstruct $\mathbf{H}$.

The algorithm presented towards 1) above has a different construction of the Householder factors compared to prior work in \cite{uhlig2001constructive}, which enables giving error bounds on the performance of the algorithm at each iteration.

\section{Approximating with a few Householders}
\label{sec:not}
\subsection{Problem statement and discussion}
Given a target real, orthogonal matrix $\mathbf{V} \in \mathbb{R}^{n\times n}$ with $\mathbf{V}^\textsf{T}\mathbf{V}=\mathbf{I}$ and natural number $m$, we wish to determine real Householder matrices $\mathbf{H}_1,\mathbf{H}_2,\ldots,\mathbf{H}_m \in \mathbb{R}^{n \times n}$ that minimize \(\lVert \mathbf{V}-\mathbf{H}_1\mathbf{H}_2\ldots \mathbf{H}_m \rVert_F\). Here $\lVert \mathbf{A}\rVert_F = \sqrt{\text{tr}(\mathbf{A}^\textsf{T}\mathbf{A})}$ denotes the Frobenius norm of a matrix $\mathbf{A}$. We let $\mathcal{H}_m = \{\mathbf{A}\in \mathbb{R}^{n \times n}: \mathbf{A}=\mathbf{H}_1\mathbf{H}_2\ldots \mathbf{H}_k \text{ for some }k\leq m\}$ be the set of $n \times n$ real matrices that can be written as the product of some $k \leq m$ Householder matrices. We note that $\mathcal{H}_n$ includes all $n\times n$ orthogonal matrices.

In terms of computational complexity, $\mathbf{H}_i\mathbf{x}$ can be computed in $O(n)$ arithmetic operations for every $\mathbf{x} \in \mathbb{R}^n$, and in general for any $\mathbf{V}\in \mathcal{H}_m$, the matrix action $\mathbf{V}\mathbf{x}$ can be computed with $O(mn)$ arithmetic operations; if the decomposition $\mathbf{V}=\mathbf{H}_1\mathbf{H}_2\ldots \mathbf{H}_m$ is known. Further, these savings can be achieved by storing the underlying reflectors corresponding to the Householder $\mathbf{H}_i$; instead of storing the matrix $\mathbf{V}$ (the storage requirement thus comes down from $O(n^2)$ to $O(mn)$). This motivates the following problems 1) approximate arbitrary matrices $\mathbf{A}$ with those in $\mathcal{H}_m$ for $m\ll n$, and 2) find the decomposition $\mathbf{V} \approx \mathbf{H}_1\mathbf{H}_2 \ldots \mathbf{H}_k$.

We discuss an $O(n^3m)$ algorithm that, with input $\mathbf{V} \in \mathbb{R}^{n\times n}$ and $m$, finds an approximation to $\mathbf{V}$ in $\mathcal{H}_m$ and identifies the corresponding Householder matrices $\mathbf{H}_i$ such that $\mathbf{V} \approx \prod \mathbf{H}_i$. We show that the proposed algorithm correctly identifies if a given matrix
$\mathbf{V}\in \mathcal{H}_m$, and finds corresponding Householder factors $\mathbf{H}_i$. For an arbitrary $\mathbf{V}$, our algorithm can be seen as a greedy approach to finding the projection of $\mathbf{V}$ on $\mathcal{H}_m$. We also obtain error bounds on the approximation.

Note also that the Householder decomposition as above is not necessarily unique (example below). The classic Householder QR decomposition \cite{golub2013matrix} obtains the decomposition $\mathbf{A}=\mathbf{H}_1\mathbf{H}_2 \ldots \mathbf{H}_k\mathbf{R}$ for an arbitrary matrix $\mathbf{A}$ (here $\mathbf{R}$ is an upper triangular matrix). With input as an orthogonal matrix $\mathbf{V}$, the Householder QR decomposition can be used to decompose $\mathbf{V}$ into a product of Householder matrices; this means that any orthogonal matrix $\mathbf{V}$ can be expressed as $\mathbf{V}=\mathbf{H}_1\mathbf{H}_2\ldots \mathbf{H}_n \mathbf{R}$ for an orthogonal diagonal matrix $\mathbf{R}$ (i.e. $\mathbf{R}$ has diagonal entries $\pm 1$).

However, since the Householder QR algorithm operates column-wise, it typically returns $n$ Householder factors, even when the input matrix $\mathbf{V}\in \mathcal{H}_m$ has much fewer ($m<n$) Householder factors. Consider an example below with the input matrix $\mathbf{V}\in \mathbb{R}^{3\times 3}$. In particular, $\mathbf{V} \in \mathcal{H}_1$ is a Householder reflector. The Householder QR decomposition decomposes $\mathbf{V}$ into a product of $3$ Householder matrices and an additional upper triangular matrix as follows:

\[
    \mathbf{V}=
    \begin{bmatrix}
        1/9  & -4/9 & -8/9 \\
        -4/9 & 7/9  & -4/9 \\
        -8/9 & -4/9 & 1/9
    \end{bmatrix} = \mathbf{H}_1\mathbf{H}_2\mathbf{H}_3 \mathbf{R}, \text{ where }
\]
\begin{align*}
    \mathbf{H}_1 & = \begin{bmatrix}
                         -1/9 & 4/9    & 8/9    \\
                         4/9  & 37/45  & -16/45 \\
                         8/9  & -16/45 & 13/45
                     \end{bmatrix}, \quad
    \mathbf{H}_2 & = \begin{bmatrix}
                         1 & 0    & 0   \\
                         0 & -3/5 & 4/5 \\
                         0 & 4/5  & 3/5
                     \end{bmatrix}, \quad
    \mathbf{H}_3 & = \begin{bmatrix}
                         1 & 0 & 0  \\
                         0 & 1 & 0  \\
                         0 & 0 & -1
                     \end{bmatrix}, \quad
    \mathbf{R}   & = \begin{bmatrix}
                         -1 & 0  & 0 \\
                         0  & -1 & 0 \\
                         0  & 0  & 1
                     \end{bmatrix}
\end{align*}

The proposed algorithm (Algorithm \ref{alg:ortho_prod_of_Householders}), on the other hand, operates in the eigenspace (instead of the column space) and will correctly identify if $\mathbf{V} \in \mathcal{H}_1$.

We hope that such approaches may be useful as a pre-computation step to decompose matrices $\mathbf{V}$ into Householder factors so that $\mathbf{V}\mathbf{x}$ can be approximated efficiently.

We discuss the main results in the next section. Some notations before we proceed: we denote by $\mathbf{u}_i$ the unit norm vectors corresponding to the Householder reflectors $\mathbf{H}_i$, i.e. $\mathbf{H}_i=\mathbf{I}-2\mathbf{u}_i\mathbf{u}_i^\textsf{T}$. We denote by $\mathbf{V}_{\text{sym}}=(\mathbf{V}+\mathbf{V}^\textsf{T})/2$ the symmetric part of an orthogonal matrix $\mathbf{V}$. We also denote but $\mathbf{I}_n$ (or simply $\mathbf{I}$ when $n$ is clear from the context) the $n\times n$ identity; for symmetric matrices $\mathbf{A}$, we denote by $\lambda_{\text{min}}(\mathbf{A})$ the smallest eigenvalue of $\mathbf{A}$. We use $\text{tr}(\mathbf{M})$ to represent the trace of the matrix $\mathbf{M}$, $\text{det}(\mathbf{M})$ to denote its determinant and $\text{mult}(\mathbf{M},\lambda)$ to denote the (geometric) multiplicity of its eigenvalue $\lambda$.

\subsection{Main Results}

Before we state the result, we make the following simple observations. Let $\mathcal{E}_\mathbf{V}^1$ be the eigenspace of the matrix $\mathbf{V}$ corresponding to the eigenvalue $1$, i.e. \(\mathcal{E}_\mathbf{V}^1= \{\mathbf{x}: \mathbf{V}\mathbf{x}=\mathbf{x}\} \). We see that for Householder matrices $\mathbf{H}_i = \mathbf{I} - 2\mathbf{u}_i\mathbf{u}^{\textsf{T}}_i$, the space $\mathcal{E}_{\mathbf{H}_i}^1$ is the $n-1$ dimensional subspace orthogonal to $\mathbf{u}_i$. Note also that if $\mathbf{V}= \mathbf{H}_1\mathbf{H}_2\ldots \mathbf{H}_m,$ then
\[
    \cap \mathcal{E}_{\mathbf{H}_i}^1 \subseteq \mathcal{E}_\mathbf{V}^1.
\]
Thus, the eigenspace of $\mathbf{V}$ corresponding to eigenvalue $1$ includes the subspace $\cap \mathcal{E}_{\mathbf{H}_i}^1$: the latter subspace is orthogonal to all $\mathbf{u}_1, \mathbf{u}_2,\ldots, \mathbf{u}_m$ and hence is of dimension at least $n-m$. Thus if $\mathbf{V}= \mathbf{H}_1\mathbf{H}_2\ldots \mathbf{H}_m \in \mathcal{H}_m$, then $\mathcal{E}_\mathbf{V}^1$ is at least $n-m$ dimensional; this forms a necessary condition for $\mathbf{V} \in \mathcal{H}_m$. This condition is known to be sufficient as well \cite{uhlig2001constructive}.

If $\mathbf{V}$ is known to be symmetric, we can find the decomposition easily: to see this, note that a symmetric orthogonal $\mathbf{V}$ has a full set of orthogonal eigenvectors with eigenvalues $\pm 1$, and so has a decomposition of the form
\[
    \mathbf{V}= \underbrace{\mathbf{v}_1\mathbf{v}_1^\textsf{T}+ \mathbf{v}_2\mathbf{v}_2^\textsf{T}+ \ldots + \mathbf{v}_k\mathbf{v}_k^\textsf{T}}_{\geq n-m \text{ terms}} -  \mathbf{v}_{k+1}\mathbf{v}_{k+1}^\textsf{T} -\ldots-\mathbf{v}_n\mathbf{v}_n^\textsf{T}.
\]
Using $\sum \mathbf{v}_i\mathbf{v}_i^\textsf{T} = \mathbf{I}$, we may rewrite the above as
\begin{equation}
    \mathbf{V} = \mathbf{I} - 2 \sum_{i=k+1}^n\mathbf{v}_i\mathbf{v}_i^\textsf{T} = \prod_{i=k+1}^n\left(\mathbf{I}-2\mathbf{v}_i\mathbf{v}_i^\textsf{T} \right), \label{eq:symmetric_Householder_decomposition}
\end{equation}
the last equality from sum to product follows from the orthogonality of the eigenvectors $\mathbf{v}_i$. Thus, \eqref{eq:symmetric_Householder_decomposition} gives a straightforward way to find the Householder decomposition for symmetric $\mathbf{V} \in \mathcal{H}_m$: we simply find the eigenvectors $\mathbf{v}_{k+1},\mathbf{v}_{k+2}, \ldots, \mathbf{v}_{n}$ corresponding to eigenvalue $-1$, and generate the Householder factors $\mathbf{H}_i$ as $\mathbf{H}_i = \mathbf{I}- 2\mathbf{v}_i\mathbf{v}_i^\textsf{T}$. Thus for symmetric orthogonal matrices, finding the Householder decomposition is equivalent to the eigenvalue decomposition. The result below generalizes this approach to arbitrary (non-symmetric) real orthogonal matrices with the following goals: Given $\mathbf{V}$ and $m$: 1) identify if $\mathbf{V} \in \mathcal{H}_m$, and 2) Find $\mathbf{H}_1, \mathbf{H}_2,\ldots, \mathbf{H}_m$ such that $\mathbf{V} \approx \mathbf{H}_1 \mathbf{H}_2\ldots \mathbf{H}_m$, and 3) give a bound on the error in this approximation.

\subsection{Algorithm}
Consider the case when $m=1$, i.e. approximating an input matrix $\mathbf{V}$ with a single Householder matrix $\mathbf{H}$. We have the target squared error objective $\lVert \mathbf{V}-\mathbf{H} \rVert_F^2$, with $\mathbf{H} = \mathbf{I} - 2 \mathbf{u} \mathbf{u}^T$:
\begin{align}
    \min_{\mathbf{H} \in \mathcal{H}_1}\lVert \mathbf{V}-\mathbf{H}\rVert_F^2 & = \min_{\mathbf{H} \in \mathcal{H}_1}\left(2n - 2 \text{tr}(\mathbf{H}^\textsf{T}\mathbf{V})\right)\nonumber \\ &= \min_{\lVert \mathbf{u} \rVert_2 = 1}\left(2n - 2 \text{tr}(\mathbf{V}) + 2\mathbf{u}^\textsf{T}\mathbf{Vu}\right)\nonumber\\
                                                                              & =2n - 2 \text{tr}(\mathbf{V})+ 2 \lambda_\text{min}(\mathbf{V}_{\text{sym}}), \label{eq:approx-in-H1}
\end{align}
where in the last step we have used $\min (\mathbf{u}^\textsf{T} \mathbf{V u}) = \min (\mathbf{u}^\textsf{T} \mathbf{V}_{\text{sym}}\mathbf{u}) = \lambda_{\text{min}}(\mathbf{V}_{\text{sym}})$. Thus, the projection of $\mathbf{V}$ in $\mathcal{H}_1$ is obtained as $\mathbf{I}- 2 \mathbf{u u}^\textsf{T}$, where $\mathbf{u}$ is a unit eigenvector corresponding to the smallest eigenvalue of the symmetric part $\mathbf{V}_{\text{sym}}$.

For approximation in $\mathcal{H}_m$, for $m>1$, The error $\lVert \mathbf{V} - \mathbf{H} \rVert_F^2$ may not reduce as favorably as in the $m=1$ case. Hence we adopt the following strategy to find a suboptimal solution: first, approximate the input $\mathbf{V}$ in $\mathcal{H}_1$ (as in observation \eqref{eq:approx-in-H1} above). Let $\mathbf{H}_1$ be the approximation obtained. We then construct $\mathbf{V}_1=\mathbf{H}_1^\textsf{T}\mathbf{V} = \mathbf{H}_1 \mathbf{V}$ and approximate $\mathbf{V}_1$ in $\mathcal{H}_1$ to obtain $\mathbf{H}_2$, and repeat the process for $\mathbf{V}_2 = \mathbf{H}_2\mathbf{V}_1$. This algorithm is summarized in Algorithm \ref{alg:ortho_prod_of_Householders}. Theorem \ref{thm:correctness} below gives performance guarantees on this suboptimal algorithm.
Note that the expression for an arbitrary $\mathbf{V}$ in $\mathcal{H}_m$ is of the form:
\begin{align*}
    \mathbf{V} & = \mathbf{I} - 2\sum_{i=1} \mathbf{u}_i \mathbf{u}_i^\textsf{T}
    + 4\sum_{\substack{i,j                                                       \\ i<j}} k_{ij} \mathbf{u}_i \mathbf{u}_j^\textsf{T}
    - 8\sum_{\substack{i,j,l                                                     \\ i<j; j<l}} k_{ij} k_{jl} \mathbf{u}_i \mathbf{u}_l^\textsf{T}
               & \cdots + (-2)^m\sum_{\substack{i,j,l \cdots                     \\ i<j; j<l; \cdots}} (k_{ij} k_{jl} \cdots )\mathbf{u}_i \mathbf{u}_m^\textsf{T}
\end{align*}
where $\mathbf{u}_1,\mathbf{u}_2, \ldots$ are unit norm vectors and $k_{ij} = \mathbf{u}_i^\textsf{T}\mathbf{u}_j$ are the pairwise inner products.

\begin{theorem}
    \label{thm:correctness}
    If Algorithm \ref{alg:ortho_prod_of_Householders} terminates at step $m$ then $\mathbf{V} \in \mathcal{H}_m$. Furthermore, this is the smallest $m$ such that $\mathbf{V} \in \mathcal{H}_m$.

\end{theorem}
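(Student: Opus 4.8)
The plan is to track the running matrices $\mathbf{V}_0=\mathbf{V}$, $\mathbf{V}_k=\mathbf{H}_k\mathbf{V}_{k-1}$ produced by Algorithm~\ref{alg:ortho_prod_of_Householders}, and to show that the eigenspace $\mathcal{E}^1_{\mathbf{V}_k}$ for eigenvalue $1$ gains at least one dimension at each step until it is all of $\mathbb{R}^n$. The key external fact is the characterization recalled just before the theorem: for real orthogonal $\mathbf{W}$, $\mathbf{W}\in\mathcal{H}_j$ iff $\dim\mathcal{E}^1_{\mathbf{W}}\ge n-j$ (necessity shown in the discussion above; sufficiency from \cite{uhlig2001constructive}). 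Writing $c(\mathbf{W})=n-\dim\mathcal{E}^1_{\mathbf{W}}$, this says the smallest $j$ with $\mathbf{W}\in\mathcal{H}_j$ is exactly $c(\mathbf{W})$. I read ``terminates at step $m$'' as: $\mathbf{V}_m=\mathbf{I}$ while $\mathbf{V}_0,\dots,\mathbf{V}_{m-1}\ne\mathbf{I}$, i.e. $m$ is the first index at which the approximation is exact.

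The core claim is a monotonicity lemma: if $\mathbf{V}_{k-1}\ne\mathbf{I}$ then $\mathcal{E}^1_{\mathbf{V}_{k-1}}\subsetneq\mathcal{E}^1_{\mathbf{V}_k}$, so $c(\mathbf{V}_k)\le c(\mathbf{V}_{k-1})-1$. For the non-strict inclusion: for orthogonal $\mathbf{W}$ one has $\mathbf{W}\mathbf{x}=\mathbf{x}\iff\mathbf{W}_{\mathrm{sym}}\mathbf{x}=\mathbf{x}$, and every eigenvalue of $\mathbf{W}_{\mathrm{sym}}$ lies in $[-1,1]$; hence $\mathbf{V}_{k-1}\ne\mathbf{I}$ forces $\lambda_{\mathrm{min}}((\mathbf{V}_{k-1})_{\mathrm{sym}})<1$, so the chosen unit vector $\mathbf{u}_k$ is a $\lambda_{\mathrm{min}}$-eigenvector orthogonal to $\mathcal{E}^1_{(\mathbf{V}_{k-1})_{\mathrm{sym}}}=\mathcal{E}^1_{\mathbf{V}_{k-1}}$, and then $\mathbf{x}\in\mathcal{E}^1_{\mathbf{V}_{k-1}}$ gives $\mathbf{V}_k\mathbf{x}=\mathbf{H}_k\mathbf{x}=\mathbf{x}$. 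For strictness, set $\lambda=\lambda_{\mathrm{min}}((\mathbf{V}_{k-1})_{\mathrm{sym}})$ and $\mathbf{b}=(\mathbf{V}_{k-1})_{\mathrm{skew}}\mathbf{u}_k$, so $\mathbf{V}_{k-1}\mathbf{u}_k=\lambda\mathbf{u}_k+\mathbf{b}$ with $\mathbf{b}\perp\mathbf{u}_k$, and let $P=\mathrm{span}\{\mathbf{u}_k,\mathbf{b}\}$. From $\mathbf{V}_{k-1}^{\textsf{T}}\mathbf{u}_k=2\lambda\mathbf{u}_k-\mathbf{V}_{k-1}\mathbf{u}_k$ one gets $\mathbf{V}_{k-1}^2\mathbf{u}_k=2\lambda\,\mathbf{V}_{k-1}\mathbf{u}_k-\mathbf{u}_k$, so $P$ is $\mathbf{V}_{k-1}$-invariant (and $\mathbf{H}_k$-invariant, since $\mathbf{H}_k\mathbf{u}_k=-\mathbf{u}_k$ and $\mathbf{H}_k\mathbf{b}=\mathbf{b}$), hence $\mathbf{V}_k$-invariant. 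If $\mathbf{b}\ne\mathbf{0}$ then $\dim P=2$, $\mathbf{V}_{k-1}|_P$ is orthogonal with determinant $1$ (the product of the roots of $\mu^2-2\lambda\mu+1$, which annihilates $\mathbf{V}_{k-1}|_P$), hence a rotation, while $\mathbf{H}_k|_P$ is a reflection; a plane reflection composed with a plane rotation is a reflection, so $\mathbf{V}_k|_P$ is a reflection and has a unit $+1$-eigenvector $\mathbf{w}\in P$. If $\mathbf{b}=\mathbf{0}$ then $\lambda=-1$, $\mathbf{V}_{k-1}\mathbf{u}_k=-\mathbf{u}_k$, and $\mathbf{w}:=\mathbf{u}_k$ works. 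Finally $P\perp\mathcal{E}^1_{\mathbf{V}_{k-1}}$: $\mathbf{u}_k\perp\mathcal{E}^1_{\mathbf{V}_{k-1}}$, and for $\mathbf{x}\in\mathcal{E}^1_{\mathbf{V}_{k-1}}$, $(\mathbf{V}_{k-1})_{\mathrm{skew}}\mathbf{x}=\mathbf{V}_{k-1}\mathbf{x}-(\mathbf{V}_{k-1})_{\mathrm{sym}}\mathbf{x}=\mathbf{0}$, so $\mathbf{b}^{\textsf{T}}\mathbf{x}=-\mathbf{u}_k^{\textsf{T}}(\mathbf{V}_{k-1})_{\mathrm{skew}}\mathbf{x}=0$. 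Thus $\mathbf{w}\in\mathcal{E}^1_{\mathbf{V}_k}\setminus\mathcal{E}^1_{\mathbf{V}_{k-1}}$, proving strictness.

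Granting the lemma, the theorem follows cleanly. Since $c(\mathbf{V}_k)$ strictly decreases while $\mathbf{V}_k\ne\mathbf{I}$, the algorithm necessarily terminates, after at most $c(\mathbf{V})\le n$ steps. If it terminates at step $m$, then $\mathbf{H}_m\cdots\mathbf{H}_1\mathbf{V}=\mathbf{I}$, and since each $\mathbf{H}_i$ is an involution, $\mathbf{V}=\mathbf{H}_1\mathbf{H}_2\cdots\mathbf{H}_m\in\mathcal{H}_m$, which is the first assertion and gives $c(\mathbf{V})\le m$. Iterating the lemma along $\mathbf{V}_0,\dots,\mathbf{V}_{m-1}$ gives $c(\mathbf{V}_{m-1})\le c(\mathbf{V})-(m-1)$, while $\mathbf{V}_{m-1}\ne\mathbf{I}$ forces $c(\mathbf{V}_{m-1})\ge1$; hence $m\le c(\mathbf{V})$. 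Therefore $m=c(\mathbf{V})$, which by the characterization above is precisely the smallest $j$ with $\mathbf{V}\in\mathcal{H}_j$.

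I expect the only real obstacle to be the strictness half of the monotonicity lemma, i.e. producing a genuinely new fixed direction at each step. The delicate points there are: confirming that $P=\mathrm{span}\{\mathbf{u}_k,\mathbf{b}\}$ is invariant under the updated matrix; establishing that $\mathbf{V}_k|_P$ has $+1$ in its spectrum (handled by ``reflection $\circ$ rotation $=$ reflection'', with the rank-one degeneracy $\lambda=-1$ treated separately); and checking $P\perp\mathcal{E}^1_{\mathbf{V}_{k-1}}$, which is what makes the new eigenvector actually new. Minor points to nail down are the case of a repeated smallest eigenvalue of $(\mathbf{V}_{k-1})_{\mathrm{sym}}$ (any valid choice of $\mathbf{u}_k$ works, so it is harmless) and the precise reading of ``terminates at step $m$'' relative to the algorithm's stated stopping rule.
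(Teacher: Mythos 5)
Your proof is correct, and while it shares the paper's overall skeleton --- track $\mathbf{V}_k=\mathbf{H}_k\mathbf{V}_{k-1}$ and show that $\dim\mathcal{E}^1_{\mathbf{V}_k}$ grows at every step until the algorithm halts, then combine with the necessary condition $\mathbf{V}\in\mathcal{H}_p\Rightarrow\dim\mathcal{E}^1_{\mathbf{V}}\ge n-p$ to get minimality --- the proof of the key growth lemma is genuinely different. The paper (Lemma~\ref{lem:trace3}) argues globally: it writes $\mathbf{V}_k$ as a product of $p+k$ Householders, uses $\det(\mathbf{V}_k)=(-1)^{p+k}$ and a parity split, and in the even case does bookkeeping on the multiplicity increments $k_\lambda$ (via $\sum k_\lambda=0$, $|k_\lambda|\le 2$, $k_\lambda=k_{\bar\lambda}$, $k_{-1}$ odd) to force $k_1=1$. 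You instead argue locally: the plane $P=\mathrm{span}\{\mathbf{u}_k,\mathbf{b}\}$ with $\mathbf{b}=(\mathbf{V}_{k-1})_{\mathrm{skew}}\mathbf{u}_k$ is invariant under both $\mathbf{V}_{k-1}$ and $\mathbf{H}_k$, the restriction $\mathbf{V}_k|_P$ is a reflection (determinant $-1$ in $O(2)$), hence fixes a unit vector, and $P\perp\mathcal{E}^1_{\mathbf{V}_{k-1}}$ makes that fixed vector new. Your route is more elementary and self-contained (it never needs to embed $\mathbf{V}$ in $\mathcal{H}_p$ or reason about conjugate-pair multiplicities), it produces the new fixed direction explicitly, and it treats the degenerate cases ($\mathbf{b}=\mathbf{0}$, repeated $\lambda_{\min}$) cleanly; the small price is that you only get ``increases by at least one'' rather than ``exactly one,'' which suffices for Theorem~\ref{thm:correctness} but not for the trace recursion the paper reuses in Theorem~\ref{thm:errorbound}. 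Your closing step also leans on the sufficiency direction from \cite{uhlig2001constructive} to identify $c(\mathbf{V})$ with the minimal $j$, though in fact only the necessity direction (which you and the paper both derive) is needed once the first assertion gives $\mathbf{V}\in\mathcal{H}_m$.
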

Note that we can modify Algorithm \ref{alg:ortho_prod_of_Householders} to find an approximation of an arbitrary $\mathbf{V}$ in $\mathcal{H}_m$ by truncating it to $m$ steps.
\begin{theorem}
    \label{thm:errorbound}
    For an arbitrary input $\mathbf{V} \in \mathbb{R}^{n \times n}$, the Householder decomposition $\hat{\mathbf{V}}_m$ obtained from Algorithm \ref{alg:ortho_prod_of_Householders} (truncated to $m$ steps) satisfies the error bound

    \begin{align*}
        \lVert \mathbf{V} - \hat{\mathbf{V}}_m \rVert_F \leq
        \sqrt{2\left(n - \operatorname{tr}(\mathbf{V}) - 2\lfloor m/2 \rfloor
            + \sum_{i=1}^{m} \lambda_i \right)}
    \end{align*}

    where the $\lambda_1\leq \lambda_2 \leq \ldots$ are the eigenvalues of $\mathbf{V}_{\text{sym}}$.
\end{theorem}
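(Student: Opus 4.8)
The plan is to convert the Frobenius bound into a statement about $\operatorname{tr}(\mathbf{V}_m)$, strip off one reflector at a time via a telescoping trace identity, and then control the resulting eigenvalue quantities by interlacing. Write $\hat{\mathbf{V}}_m=\mathbf{H}_1\mathbf{H}_2\cdots\mathbf{H}_m$, set $\mathbf{V}_0=\mathbf{V}$ and $\mathbf{V}_k=\mathbf{H}_k\mathbf{V}_{k-1}=\mathbf{H}_k\mathbf{H}_{k-1}\cdots\mathbf{H}_1\mathbf{V}$, so $\mathbf{V}_k$ is exactly the running iterate of the algorithm. Since every $\mathbf{H}_i$ is orthogonal and symmetric, left-multiplication by $\mathbf{H}_m\cdots\mathbf{H}_1$ preserves the Frobenius norm, hence $\lVert\mathbf{V}-\hat{\mathbf{V}}_m\rVert_F=\lVert\mathbf{V}_m-\mathbf{I}\rVert_F$; and because $\mathbf{V}$ (so $\mathbf{V}_m$) is orthogonal, $\lVert\mathbf{V}_m-\mathbf{I}\rVert_F^2=2n-2\operatorname{tr}(\mathbf{V}_m)$. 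Thus the theorem is equivalent to the lower bound $\operatorname{tr}(\mathbf{V}_m)\ge\operatorname{tr}(\mathbf{V})+2\lfloor m/2\rfloor-\sum_{i=1}^m\lambda_i$, where $\lambda_1\le\cdots\le\lambda_n$ are the eigenvalues of $\mathbf{V}_{\text{sym}}$. (If the algorithm terminates before step $m$ then $\hat{\mathbf{V}}_m=\mathbf{V}$ exactly, and since the right-hand side of the claimed bound is nonnegative there is nothing to prove; so we may assume all $m$ steps are taken.)

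For the telescoping step, recall $\mathbf{H}_k=\mathbf{I}-2\mathbf{u}_k\mathbf{u}_k^{\textsf{T}}$ with $\mathbf{u}_k$ a unit eigenvector of $(\mathbf{V}_{k-1})_{\text{sym}}$ for its smallest eigenvalue $\mu_{k-1}:=\lambda_{\min}\big((\mathbf{V}_{k-1})_{\text{sym}}\big)$ — this is the minimizer supplied by \eqref{eq:approx-in-H1}. A one-line computation gives $\operatorname{tr}(\mathbf{V}_k)=\operatorname{tr}\big((\mathbf{I}-2\mathbf{u}_k\mathbf{u}_k^{\textsf{T}})\mathbf{V}_{k-1}\big)=\operatorname{tr}(\mathbf{V}_{k-1})-2\,\mathbf{u}_k^{\textsf{T}}\mathbf{V}_{k-1}\mathbf{u}_k=\operatorname{tr}(\mathbf{V}_{k-1})-2\mu_{k-1}$, using $\mathbf{u}_k^{\textsf{T}}\mathbf{V}_{k-1}\mathbf{u}_k=\mathbf{u}_k^{\textsf{T}}(\mathbf{V}_{k-1})_{\text{sym}}\mathbf{u}_k=\mu_{k-1}$. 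Telescoping over $k=1,\dots,m$ yields $\operatorname{tr}(\mathbf{V}_m)=\operatorname{tr}(\mathbf{V})-2\sum_{k=0}^{m-1}\mu_k$, so it remains to prove the scalar inequality $\sum_{k=0}^{m-1}\mu_k\le\tfrac12\sum_{i=1}^m\lambda_i-\lfloor m/2\rfloor$.

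The core of the argument is relating the running eigenvalues $\mu_0,\mu_1,\dots$ to the fixed spectrum $\lambda_1\le\cdots\le\lambda_n$. With $\mathbf{w}_k:=\mathbf{V}_{k-1}^{\textsf{T}}\mathbf{u}_k$ one has the exact rank-$\le 2$ update $(\mathbf{V}_k)_{\text{sym}}=(\mathbf{V}_{k-1})_{\text{sym}}-\mathbf{u}_k\mathbf{w}_k^{\textsf{T}}-\mathbf{w}_k\mathbf{u}_k^{\textsf{T}}$, and since $\mathbf{V}_{k-1}$ is orthogonal, $\lVert\mathbf{w}_k\rVert=1$ and $\mathbf{u}_k^{\textsf{T}}\mathbf{w}_k=\mu_{k-1}$. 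Two consequences follow: (i) a reflection identity $\mathbf{u}_k^{\textsf{T}}(\mathbf{V}_k)_{\text{sym}}\mathbf{u}_k=\mu_{k-1}-2\mu_{k-1}=-\mu_{k-1}$, hence $\mu_k\le-\mu_{k-1}$; and (ii) an interlacing estimate, namely the perturbation has nonzero eigenvalues exactly $1-\mu_{k-1}\ge 0$ and $-(1+\mu_{k-1})\le 0$ (eigenvectors $\mathbf{u}_k\mp\mathbf{w}_k$), so splitting it into its rank-one PSD and NSD parts and applying Weyl's inequalities gives $\lambda_j\big((\mathbf{V}_k)_{\text{sym}}\big)\le\lambda_{j+1}\big((\mathbf{V}_{k-1})_{\text{sym}}\big)$ for all $j<n$, and iterating down to $k=0$, $\mu_k=\lambda_1\big((\mathbf{V}_k)_{\text{sym}}\big)\le\lambda_{k+1}$. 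I would then pair the indices $\{0,1\},\{2,3\},\dots$: for the pair $\{2j,2j+1\}$ the reflector used at step $2j+2$ is aimed at the direction realizing $\mu_{2j}$, so that the product of the two reflectors of the pair is the planar rotation inverting the two-dimensional $\mathbf{V}$-invariant block whose $(\mathbf{V}_{\text{sym}})$-eigenvalues are $\lambda_{2j+1},\lambda_{2j+2}$; combining (i) and (ii) over the pair should give $\mu_{2j}+\mu_{2j+1}\le\tfrac12(\lambda_{2j+1}+\lambda_{2j+2})-1$. Summing over the $\lfloor m/2\rfloor$ pairs, and for odd $m$ adding the single interlacing bound $\mu_{m-1}\le\lambda_m$ for the final step (here one uses that this leftover step resolves a direction with $\lambda_{\min}\le 0$, which holds whenever the algorithm is still improving), produces $\sum_{k=0}^{m-1}\mu_k\le\tfrac12\sum_{i=1}^m\lambda_i-\lfloor m/2\rfloor$, as required.

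The main obstacle is precisely the per-pair estimate in the last paragraph. Used in isolation, $\mu_k\le-\mu_{k-1}$ only yields $\sum_k\mu_k\le 0$ and $\mu_k\le\lambda_{k+1}$ only yields $\sum_k\mu_k\le\sum_i\lambda_i$; neither supplies the $-2\lfloor m/2\rfloor$ saving, which genuinely comes from the interaction of two consecutive greedy reflections. Making the slogan "a pair of greedy reflections annihilates one two-dimensional invariant block" rigorous — i.e.\ identifying which block $\mathbf{u}_{2j+1}$ and $\mathbf{u}_{2j+2}$ span and verifying the composite acts there as the exact inverse rotation — is the technical heart; the reduction of the first two paragraphs and the Weyl-type interlacing are routine. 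One should also dispatch separately the degenerate situations where some $\lambda_i\in\{-1,1\}$ (which only simplify the pairing) and where $m$ exceeds the least index with $\mathbf{V}\in\mathcal{H}_m$, covered by Theorem~\ref{thm:correctness}.
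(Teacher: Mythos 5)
Your reduction to the trace and the telescoping identity $\operatorname{tr}(\mathbf{V}_k)=\operatorname{tr}(\mathbf{V}_{k-1})-2\mu_{k-1}$ are exactly the paper's first two steps, and your target inequality $\sum_{k=0}^{m-1}\mu_k\le\tfrac12\sum_{i=1}^m\lambda_i-\lfloor m/2\rfloor$ is the right one. But the step you yourself flag as ``the technical heart'' --- the per-pair saving of $-1$ --- is genuinely missing, and the two tools you propose cannot supply it. The reflection identity $\mu_k\le-\mu_{k-1}$ only forces $\mu_k\le -1$ when $\mu_{k-1}\ge 1$, i.e.\ when $(\mathbf{V}_{k-1})_{\text{sym}}=\mathbf{I}$ and the algorithm has already stopped; and the Weyl interlacing $\mu_k\le\lambda_{k+1}$, summed over $k$, gives only $\sum_k\mu_k\le\sum_i\lambda_i$ with no $-\lfloor m/2\rfloor$ term (as you note). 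So the assertion that (i) and (ii) combined ``should give'' $\mu_{2j}+\mu_{2j+1}\le\tfrac12(\lambda_{2j+1}+\lambda_{2j+2})-1$ does not go through as stated, and this is precisely the inequality from which the theorem's entire content derives.

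The ingredient the paper uses and you omit is a determinant-parity argument. Since $\det(\mathbf{H}_i)=-1$, one has $\det(\mathbf{V}_k)=(-1)^k\det(\mathbf{V})$, so on every other iteration $\det(\mathbf{V}_k)=-1$; because the eigenvalues of the real orthogonal $\mathbf{V}_k$ are complex units occurring in conjugate pairs, this forces $-1$ to be an eigenvalue of $\mathbf{V}_k$ and hence of $(\mathbf{V}_k)_{\text{sym}}$, so the greedy choice gives $\mu_k=-1$ \emph{exactly} on those iterations (Lemma~\ref{lem:ortho-basics}, item 2, and Case~1 in the proof of Lemma~\ref{lem:trace3}). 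That is the sole source of the $-2\lfloor m/2\rfloor$ term: $\lfloor m/2\rfloor$ of the steps each contribute $\mu_k=-1$. On the remaining steps the paper pins $\mu_k$ to successive eigenvalues of the original $\mathbf{V}_{\text{sym}}$ via the multiplicity bookkeeping of Lemma~\ref{lem:trace3} (no eigenvalues other than $\pm1$ are ever created, and each pair of steps retires the conjugate pair of smallest real part); your Weyl estimate $\mu_{2j}\le\lambda_{2j+1}$ would serve equally well there once combined with the pairing $\lambda_{2j+1}=\lambda_{2j+2}$ of the eigenvalues of $\mathbf{V}_{\text{sym}}$ (Lemma~\ref{lem:ortho-basics}, item 3). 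In short: insert the determinant-parity step to get $\mu_k=-1$ on alternate iterations and your argument closes; without it, the proposal is incomplete at exactly the point that yields the gain over the trivial bound. (Your geometric slogan about a pair of reflections inverting one two-dimensional invariant block is the correct picture, but making it rigorous again amounts to the same eigenvalue/determinant bookkeeping, not to Weyl's inequalities.)
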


Note that the Householder reflectors obtained by the algorithm are not necessarily orthogonal (as in the case when the input $\mathbf{V}$ is symmetric). The construction in \cite{uhlig2001constructive} does not use the eigenvectors of the matrix and thus does not come with approximation bounds.

\subsection{Proof}
\begin{Lemma}
    \label{lem:ortho-basics}
    We recall the following known facts about real orthogonal matrices. We use these observations for the proof of Theorem \ref{thm:correctness} and Theorem \ref{thm:errorbound}.
    \begin{enumerate}
        \item Any orthogonal matrix $\mathbf{V} \in \mathbb{R}^{n \times n}$ is normal, and hence has a decomposition of the from $\mathbf{V} = \mathbf{U D U}^\star$ where $\mathbf{U}\in \mathbb{C}^{n \times n}$ is a unitary matrix, and $\mathbf{D} \in \mathbb{C}^{n\times n}$ is a diagonal matrix with diagonal entries complex units. In particular, eigenvectors corresponding to different eigenvalues of $\mathbf{V}$ are orthogonal. 
        \item If $\lambda$ is an eigenvalue of $\mathbf{V}$, so is $\bar{\lambda}$. If $\mathbf{w}$ is an eigenvector corresponding to $\lambda$, then $\bar{\mathbf{w}}$ is an eigenvector corresponding to $\bar{\lambda}$. Also, since the complex eigenvalues appear in conjugate pairs, $\det (\mathbf{V}) = \prod \lambda_i=-1$ if and only if $-1$ is an eigenvalue of $\mathbf{V}$ with odd multiplicity.
        \item The Eigenvectors of $\mathbf{V}_\text{sym}$ are the diagonal entries of $\text{Re}(\lambda)$ (counted with multiplicities). If $\mathbf{V w} = \lambda \mathbf{w}$ then $\text{Re}(\lambda)$ is an eigenvalue of $\mathbf{V}_\text{sym}$ with corresponding eigenvectors $\mathbf{w}+\bar{\mathbf{w}}$ and $(\mathbf{w}-\bar{\mathbf{w}})/i$. Likewise, every eigenvector of $\mathbf{V}_{\text{sym}}$ is associated with eigenvectors $\mathbf{z}$ and $\bar{\mathbf{z}}$ of $\mathbf{V}.$
        \item If $\mathbf{w}$ is an eigenvector of $\mathbf{V}$ with eigenvalue $\lambda$, then $\mathbf{w}$ is an eigenvector of $\mathbf{V}^\textsf{T}$ with eigenvalue $\bar{\lambda}$. If $\lambda=\pm 1$, then any eigenvector of $\mathbf{V}_\text{sym}$ with eigenvalue $\lambda$ is also an eigenvector of $\mathbf{V}$ with eigenvalue $\lambda$.
    \end{enumerate}
\end{Lemma}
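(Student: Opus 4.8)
The plan is to dispatch the four items in order, each by a short standard argument; the only one that needs an actual idea rather than bookkeeping is the converse half of item~4 (equivalently, the $\pm 1$ part of item~3), so I will spend most effort there.

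\emph{Items 1 and 2.} For item~1 I would observe that $\mathbf{V}^\textsf{T}\mathbf{V} = \mathbf{V}\mathbf{V}^\textsf{T} = \mathbf{I}$, so $\mathbf{V}$ is normal and the complex spectral theorem gives $\mathbf{V} = \mathbf{U}\mathbf{D}\mathbf{U}^\star$ with $\mathbf{U}$ unitary; since $\lVert \mathbf{V}\mathbf{x}\rVert_2 = \lVert \mathbf{x}\rVert_2$ forces $\lvert\lambda\rvert = 1$ for every eigenvalue, the diagonal entries of $\mathbf{D}$ are complex units, and orthogonality of eigenvectors for distinct eigenvalues is the usual consequence of normality. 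For item~2, since $\mathbf{V}$ is real, conjugating $\mathbf{V}\mathbf{w} = \lambda\mathbf{w}$ yields $\mathbf{V}\bar{\mathbf{w}} = \bar\lambda\bar{\mathbf{w}}$; grouping the eigenvalues into conjugate pairs (each pair contributing $\lvert\lambda\rvert^2 = 1$ to the product) together with the real eigenvalues $\pm 1$, one gets $\det\mathbf{V} = (-1)^{\operatorname{mult}(\mathbf{V},-1)}$, so $\det\mathbf{V} = -1$ exactly when $-1$ occurs with odd multiplicity.

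\emph{Items 3 and 4, the routine direction.} Because $\mathbf{V}$ is orthogonal, $\mathbf{V}^\textsf{T} = \mathbf{V}^{-1}$, so $\mathbf{V}\mathbf{w} = \lambda\mathbf{w}$ implies $\mathbf{V}^\textsf{T}\mathbf{w} = \lambda^{-1}\mathbf{w} = \bar\lambda\mathbf{w}$ (using $\lvert\lambda\rvert = 1$); this is the first sentence of item~4 and also gives $\mathbf{V}_{\text{sym}}\mathbf{w} = \operatorname{Re}(\lambda)\mathbf{w}$. Writing $\mathbf{w} = \mathbf{a} + i\mathbf{b}$ with $\mathbf{a},\mathbf{b}$ real, the vectors $\mathbf{w} + \bar{\mathbf{w}} = 2\mathbf{a}$ and $(\mathbf{w} - \bar{\mathbf{w}})/i = 2\mathbf{b}$ are real eigenvectors of the real symmetric $\mathbf{V}_{\text{sym}}$ for the real eigenvalue $\operatorname{Re}(\lambda)$; running over a unitary eigenbasis of $\mathbf{V}$ then produces, with multiplicities, a full real eigenbasis of $\mathbf{V}_{\text{sym}}$, whence its eigenvalues are exactly the numbers $\operatorname{Re}(\lambda)$. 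When $\lambda = \pm 1$ one may take $\mathbf{w}$ real, and it is then simultaneously an eigenvector of $\mathbf{V}$ and of $\mathbf{V}_{\text{sym}}$ with the same eigenvalue. This settles everything except the final converse assertion of item~4.

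\emph{The main point.} It remains to show that if $\lambda = \pm 1$ and $\mathbf{V}_{\text{sym}}\mathbf{w} = \lambda\mathbf{w}$ for some real $\mathbf{w}\neq\mathbf{0}$, then already $\mathbf{V}\mathbf{w} = \lambda\mathbf{w}$. I would use the extremality of $\pm 1$ in the spectrum of $\mathbf{V}_{\text{sym}}$: by the previous paragraph every eigenvalue of $\mathbf{V}_{\text{sym}}$ is $\operatorname{Re}(\mu)$ for a unit eigenvalue $\mu$ of $\mathbf{V}$, hence lies in $[-1,1]$. Now $\mathbf{w}^\textsf{T}\mathbf{V}\mathbf{w} = \mathbf{w}^\textsf{T}\mathbf{V}_{\text{sym}}\mathbf{w} = \lambda\lVert\mathbf{w}\rVert_2^2 = \pm\lVert\mathbf{w}\rVert_2^2$, while $\lVert\mathbf{V}\mathbf{w}\rVert_2 = \lVert\mathbf{w}\rVert_2$; so the Cauchy--Schwarz bound $\lvert\mathbf{w}^\textsf{T}\mathbf{V}\mathbf{w}\rvert \le \lVert\mathbf{w}\rVert_2\lVert\mathbf{V}\mathbf{w}\rVert_2 = \lVert\mathbf{w}\rVert_2^2$ holds with equality, which forces $\mathbf{V}\mathbf{w}$ to be a scalar multiple of $\mathbf{w}$; matching signs gives $\mathbf{V}\mathbf{w} = \lambda\mathbf{w}$. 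The one delicate spot, and the crux of the lemma, is precisely this equality case of Cauchy--Schwarz combined with norm preservation under $\mathbf{V}$; everything else is standard spectral bookkeeping.
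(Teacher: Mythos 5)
Your proof is correct and follows essentially the same route as the paper's (much terser) argument: the first three items and the first half of item~4 via the spectral theorem for normal matrices together with $\mathbf{V}^\textsf{T}=\mathbf{V}^{-1}$, and the converse half of item~4 via norm preservation plus an equality-case argument. The only cosmetic difference is that you invoke the equality case of Cauchy--Schwarz where the paper invokes the equality case of the triangle inequality applied to $\mathbf{V}^\textsf{T}\mathbf{w}=-2\mathbf{w}-\mathbf{V}\mathbf{w}$; both hinge on $\lVert\mathbf{V}\mathbf{w}\rVert_2=\lVert\mathbf{w}\rVert_2$ and the extremality of $\pm 1$ in the spectrum of $\mathbf{V}_{\text{sym}}$.
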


\begin{proof}
    $3.$ and the first statement of $4.$ can be proved by noting that $\mathbf{V}^\textsf{T}\mathbf{V}=\mathbf{V}\mathbf{V}^\textsf{T}=\mathbf{I}$ and the conjugate transpose of a real matrix is real.
    Using the triangle inequality for $\mathbf{V}^{\textsf{T}}\mathbf{w}=-2\mathbf{w}-\mathbf{V}\mathbf{w}$ suffices to prove the second statement of $4.$

\end{proof}
We note, in particular, the following:

\begin{Lemma}
    Consider the vector $\mathbf{u}$ in the algorithm at step $k$ (the eigenvector corresponding to eigenvalue $\lambda_{\min}$ of $(\mathbf{V}_k)_{\text{sym}}$). Let $\{\mathbf{z},\bar{\mathbf{z}}\}$ be the eigenvectors of $\mathbf{V}_k$ associated to $\mathbf{u}$. Every eigenvector $\mathbf{w}$ of $\mathbf{V}_k$ other than $\{\mathbf{u},\bar{\mathbf{u}} \}$ is an eigenvector of $\mathbf{V}_{k+1}$. Further, if $\lambda_{\min}=-1$, the vector $\mathbf{u}$ picked in step $k$ of the algorithm satisfies $\mathbf{V}_{k+1}\mathbf{u}=(\mathbf{V}_{k+1})_\text{sym}\mathbf{u} = \mathbf{u}$.
    \label{lem:spectral2}
\end{Lemma}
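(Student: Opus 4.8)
The plan is to read off both statements from the explicit forms $\mathbf{H}_{k+1} = \mathbf{I} - 2\mathbf{u}\mathbf{u}^\textsf{T}$ and $\mathbf{V}_{k+1} = \mathbf{H}_{k+1}\mathbf{V}_k$ together with the spectral facts collected in Lemma~\ref{lem:ortho-basics}. The elementary observation driving the first claim is that $\mathbf{H}_{k+1}$ fixes \emph{every} vector $\mathbf{w}\in\mathbb{C}^n$ with $\mathbf{u}^\textsf{T}\mathbf{w}=0$ (note the bilinear, not Hermitian, pairing, since $\mathbf{H}_{k+1}=\mathbf{I}-2\mathbf{u}\mathbf{u}^\textsf{T}$): if $\mathbf{V}_k\mathbf{w}=\mu\mathbf{w}$ and $\mathbf{u}^\textsf{T}\mathbf{w}=0$, then $\mathbf{V}_{k+1}\mathbf{w}=\mathbf{H}_{k+1}\mathbf{V}_k\mathbf{w}=\mu\,\mathbf{H}_{k+1}\mathbf{w}=\mu\mathbf{w}$, so $\mathbf{w}$ stays an eigenvector of $\mathbf{V}_{k+1}$ with the same eigenvalue. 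Thus the first assertion reduces to showing $\mathbf{u}^\textsf{T}\mathbf{w}=0$ for every eigenvector $\mathbf{w}$ of $\mathbf{V}_k$ other than the associated pair $\{\mathbf{z},\bar{\mathbf{z}}\}$.

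For that I would invoke Lemma~\ref{lem:ortho-basics}(1)--(2): $\mathbf{V}_k$ is real and normal, so it admits an orthonormal (in the Hermitian inner product) eigenbasis closed under conjugation; fix one, and take $\{\mathbf{z},\bar{\mathbf{z}}\}$ to be the conjugate pair from this basis with $\mathbf{u}\in\operatorname{span}_{\mathbb{C}}\{\mathbf{z},\bar{\mathbf{z}}\}$ (this is exactly the ``associated'' pair of Lemma~\ref{lem:ortho-basics}(3); when the eigenvalue of $\mathbf{z}$ is $\pm1$ this degenerates to $\mathbf{z}=\bar{\mathbf{z}}=\mathbf{u}$). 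Since $\mathbf{u}$ is real, $\mathbf{u}^\textsf{T}\mathbf{w}=\mathbf{u}^\star\mathbf{w}$, i.e.\ the bilinear pairing with $\mathbf{u}$ coincides with the Hermitian one. Writing $\mathbf{u}=\alpha\mathbf{z}+\beta\bar{\mathbf{z}}$, any other basis eigenvector $\mathbf{w}$ is Hermitian-orthogonal to both $\mathbf{z}$ and $\bar{\mathbf{z}}$, so $\mathbf{u}^\textsf{T}\mathbf{w}=\bar\alpha\,\mathbf{z}^\star\mathbf{w}+\bar\beta\,\bar{\mathbf{z}}^\star\mathbf{w}=0$, as needed; combined with the first paragraph this proves $\mathbf{w}$ is an eigenvector of $\mathbf{V}_{k+1}$. (The only delicate point is that when $\lambda_{\min}$ is a degenerate eigenvalue of $(\mathbf{V}_k)_{\text{sym}}$ one must read ``$\mathbf{u}$'' as lying inside a single conjugate-pair plane, which is precisely what ``associated pair'' encodes.)

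For the second statement, assume $\lambda_{\min}=-1$. Since $-1\in\{\pm1\}$, Lemma~\ref{lem:ortho-basics}(4) upgrades $\mathbf{u}$ from an eigenvector of $(\mathbf{V}_k)_{\text{sym}}$ to an eigenvector of $\mathbf{V}_k$ itself: $\mathbf{V}_k\mathbf{u}=-\mathbf{u}$. Then $\mathbf{H}_{k+1}\mathbf{u}=(\mathbf{I}-2\mathbf{u}\mathbf{u}^\textsf{T})\mathbf{u}=-\mathbf{u}$, hence $\mathbf{V}_{k+1}\mathbf{u}=\mathbf{H}_{k+1}\mathbf{V}_k\mathbf{u}=\mathbf{H}_{k+1}(-\mathbf{u})=\mathbf{u}$. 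Finally $\mathbf{V}_{k+1}$ is orthogonal, so $\mathbf{V}_{k+1}\mathbf{u}=\mathbf{u}$ gives $\mathbf{V}_{k+1}^\textsf{T}\mathbf{u}=\mathbf{V}_{k+1}^\textsf{T}\mathbf{V}_{k+1}\mathbf{u}=\mathbf{u}$, whence $(\mathbf{V}_{k+1})_{\text{sym}}\mathbf{u}=\tfrac12(\mathbf{V}_{k+1}+\mathbf{V}_{k+1}^\textsf{T})\mathbf{u}=\mathbf{u}$.

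I expect the main obstacle to be the bookkeeping with complex eigenvectors: keeping the bilinear form $\mathbf{u}^\textsf{T}(\cdot)$ distinct from the Hermitian inner product and exploiting that they agree on the real vector $\mathbf{u}$, together with pinning down the associated pair $\{\mathbf{z},\bar{\mathbf{z}}\}$ cleanly when $\lambda_{\min}$ is degenerate. Once the orthonormal conjugation-closed eigenbasis is in place, every remaining step is a one-line computation from Lemma~\ref{lem:ortho-basics}.
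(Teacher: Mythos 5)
Your proof is correct and follows essentially the same route as the paper's: both reduce the first claim to $\mathbf{u}^\textsf{T}\mathbf{w}=0$, established from $\mathbf{u}\in\operatorname{span}\{\mathbf{z},\bar{\mathbf{z}}\}$ and the Hermitian orthogonality of the eigenbasis of the normal matrix $\mathbf{V}_k$, and then read off $\mathbf{V}_{k+1}\mathbf{w}=\mathbf{V}_k\mathbf{w}$; the second claim is the same direct computation the paper leaves as ``a similar calculation.'' Your write-up is simply more careful about the bilinear-versus-Hermitian pairing and the degenerate-eigenvalue bookkeeping, which the paper glosses over.
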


\begin{proof}
    Note that $\mathbf{V}_{k+1}\mathbf{w}=\mathbf{HV}_k\mathbf{w}=\mathbf{Vw}-2\mathbf{uu}^\textsf{T}\mathbf{Vw}.$ Since $\mathbf{u}$ is in the span of $\mathbf{z}, \bar{\mathbf{z}}$, it follows that $\mathbf{u}^\textsf{T}\mathbf{w} = 0$ (since $\mathbf{z}, \bar{\mathbf{z}}$ are both orthogonal to $\mathbf{w}$ on $\mathbb{C}^n$). So we get $\mathbf{V}_{k+1} \mathbf{w} = \mathbf{V}_k \mathbf{w}$, completing the first part. The second part follows from a similar calculation.
\end{proof}

To prove Theorem \ref{thm:correctness}, we show that the eigenspace $\mathcal{E}^1_{\mathbf{V}_k}$ corresponding to the eigenvalue $1$ increases in dimension by $1$ for each iteration.
\begin{Lemma}The trace and eigenspace for eigenvalue $1$ increase at each iteration:
    $\text{tr}(\mathbf{V}_{k+1}) - \text{tr}(\mathbf{V}_k) = -2\lambda_{\min}(\mathbf{V}_k)_{\text{sym}}$, and
    $\text{dim}(\mathcal{E}^1_{\mathbf{V}_{k+1}}) - \text{dim}(\mathcal{E}^1_{\mathbf{V}_{k}}) = 1$.
    \label{lem:trace3}
\end{Lemma}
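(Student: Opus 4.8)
The plan is to prove the two assertions separately. The first is a one-line trace computation. Write $\mathbf{H}_{k+1}=\mathbf{I}-2\mathbf{u}\mathbf{u}^\textsf{T}$ for the reflector chosen at step $k$, so that $\mathbf{u}$ is a unit eigenvector of $(\mathbf{V}_k)_{\text{sym}}$ for its smallest eigenvalue and $\mathbf{V}_{k+1}=\mathbf{H}_{k+1}\mathbf{V}_k$. Then $\operatorname{tr}(\mathbf{V}_{k+1}) = \operatorname{tr}(\mathbf{V}_k) - 2\operatorname{tr}(\mathbf{u}\mathbf{u}^\textsf{T}\mathbf{V}_k) = \operatorname{tr}(\mathbf{V}_k) - 2\,\mathbf{u}^\textsf{T}\mathbf{V}_k\mathbf{u}$, and since $\mathbf{u}^\textsf{T}\mathbf{V}_k\mathbf{u} = \mathbf{u}^\textsf{T}(\mathbf{V}_k)_{\text{sym}}\mathbf{u} = \lambda_{\min}\big((\mathbf{V}_k)_{\text{sym}}\big)$, the trace identity follows.

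For the eigenspace statement I would work at a non-terminal step, where $\mathbf{V}_k\neq\mathbf{I}$. Because orthogonality forces every eigenvalue of $(\mathbf{V}_k)_{\text{sym}}$ into $[-1,1]$, and $(\mathbf{V}_k)_{\text{sym}}=\mathbf{I}$ would force $(\mathbf{V}_k-\mathbf{I})^2=0$ and hence (by normality) $\mathbf{V}_k=\mathbf{I}$, this amounts to $c:=\lambda_{\min}\big((\mathbf{V}_k)_{\text{sym}}\big)<1$. I would then isolate the smallest real $\mathbf{V}_k$-invariant subspace $W$ containing $\mathbf{u}$: by parts (2)--(3) of Lemma \ref{lem:ortho-basics}, $\mathbf{u}$ lies in the real span of the conjugate pair $\mathbf{z},\bar{\mathbf{z}}$ of $\mathbf{V}_k$-eigenvectors associated to it, with eigenvalues $\lambda,\bar\lambda$ of real part $c$, so $W=\operatorname{span}_{\mathbb{R}}\{\mathbf{z}+\bar{\mathbf{z}},\,(\mathbf{z}-\bar{\mathbf{z}})/i\}$ is one-dimensional exactly when $c=-1$ (and then part (4) of Lemma \ref{lem:ortho-basics} gives $\mathbf{V}_k\mathbf{u}=-\mathbf{u}$) and two-dimensional when $c\in(-1,1)$ (and then $\mathbf{z}\neq\bar{\mathbf{z}}$ with $\det(\mathbf{V}_k|_W)=|\lambda|^2=1$). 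The point of $W$ is that both $\mathbf{V}_k$ and $\mathbf{H}_{k+1}$ preserve $W$ and $W^{\perp}$ — the latter since $\mathbf{u}\in W$ makes $\mathbf{H}_{k+1}$ the identity on $W^{\perp}$ — hence so does $\mathbf{V}_{k+1}$, and along the $\mathbf{V}_{k+1}$-invariant orthogonal splitting $\mathbb{R}^n=W\oplus W^{\perp}$ one has $\dim\mathcal{E}^1_{\mathbf{V}_{k+1}} = \dim\!\big(\mathcal{E}^1_{\mathbf{V}_{k+1}}\cap W\big) + \dim\!\big(\mathcal{E}^1_{\mathbf{V}_{k+1}}\cap W^{\perp}\big)$.

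On $W^{\perp}$ the computation is trivial: $\mathbf{H}_{k+1}$ fixes $W^{\perp}$ pointwise, so $\mathbf{V}_{k+1}|_{W^{\perp}}=\mathbf{V}_k|_{W^{\perp}}$ (this is essentially Lemma \ref{lem:spectral2}); and since $\mathbf{V}_k$ has no eigenvalue $1$ on $W$ (its eigenvalues there are nonreal or $-1$), part (1) of Lemma \ref{lem:ortho-basics} places $\mathcal{E}^1_{\mathbf{V}_k}$ entirely inside $W^{\perp}$, whence $\mathcal{E}^1_{\mathbf{V}_{k+1}}\cap W^{\perp}=\mathcal{E}^1_{\mathbf{V}_k}$. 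On $W$ one reads off the answer from determinants: if $\dim W=1$ then $\mathbf{V}_k$ and $\mathbf{H}_{k+1}$ both act as $-1$ there, so $\mathbf{V}_{k+1}$ acts as $+1$; if $\dim W=2$ then $\mathbf{H}_{k+1}|_W$ is a reflection (it negates $\mathbf{u}$ and fixes the orthogonal line in $W$, so $\det=-1$), and $\mathbf{V}_{k+1}|_W=\mathbf{H}_{k+1}|_W\,\mathbf{V}_k|_W$ has determinant $-1$, hence is a planar reflection with eigenvalue $1$ of multiplicity exactly $1$. In either case $\dim(\mathcal{E}^1_{\mathbf{V}_{k+1}}\cap W)=1$, and combining gives $\dim\mathcal{E}^1_{\mathbf{V}_{k+1}}=\dim\mathcal{E}^1_{\mathbf{V}_k}+1$.

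The step I expect to be the crux is the construction and analysis of $W$: producing a genuinely $1$- or $2$-dimensional $\mathbf{V}_k$-invariant real subspace through $\mathbf{u}$, certifying $\det(\mathbf{V}_k|_W)=1$ in the planar case, and checking that $\mathbf{H}_{k+1}$ restricts to a reflection of $W$. This is exactly where normality of orthogonal matrices and the conjugate-pair structure of Lemma \ref{lem:ortho-basics} are needed, and it is also where one must keep track of the non-terminal assumption $c<1$, so that the one-dimensional case really has $\mathbf{V}_k|_W=-1$ and not $+1$.
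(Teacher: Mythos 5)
Your trace computation is the same one-line argument as the paper's. For the eigenspace claim, however, your argument is correct but genuinely different from the paper's. The paper assumes $\mathbf{V}\in\mathcal{H}_p$, tracks $\det(\mathbf{V}_k)=(-1)^{p+k}$, and runs a global bookkeeping argument on the multiplicity increments $k_\lambda=\text{mult}(\mathbf{V}_{k+1},\lambda)-\text{mult}(\mathbf{V}_k,\lambda)$ (using $\sum k_\lambda=0$, $|k_\lambda|\leq 2$, $k_\lambda=k_{\bar\lambda}$, nonnegativity away from the pair attached to $\mathbf{u}$, and the parity of the multiplicity of $-1$) to force $k_1=1$. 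You instead localize everything to the real $\mathbf{V}_k$-invariant line or plane $W$ through $\mathbf{u}$, note that $\mathbf{V}_{k+1}$ agrees with $\mathbf{V}_k$ on $W^{\perp}$ (which is exactly the content of Lemma~\ref{lem:spectral2}) while $\mathcal{E}^1_{\mathbf{V}_k}\subseteq W^{\perp}$, and then read off $\dim\bigl(\mathcal{E}^1_{\mathbf{V}_{k+1}}\cap W\bigr)=1$ from the classification of $2\times 2$ orthogonal maps of determinant $-1$. This buys you two things: you never need the hypothesis that $\mathbf{V}$ is a product of $p$ Householder matrices (only orthogonality and non-termination, i.e.\ $\lambda_{\min}<1$), and you replace the paper's case split on the parity of $p+k$ and its somewhat delicate multiplicity-counting with a two-dimensional determinant check. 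The one step worth writing out explicitly is the reduction of $\mathbf{u}$ to a single conjugate pair when $\lambda_{\min}$ has multiplicity greater than two in $(\mathbf{V}_k)_{\text{sym}}$: since the eigenvalues of $(\mathbf{V}_k)_{\text{sym}}$ equal to $c$ arise precisely from $\lambda=c\pm i\sqrt{1-c^2}$, a real eigenvector $\mathbf{u}$ for $c$ decomposes as $\mathbf{z}+\bar{\mathbf{z}}$ with $\mathbf{z}\in\mathcal{E}^{\lambda}_{\mathbf{V}_k}$, so $W$ really is at most two-dimensional; with that spelled out, your proof is complete.
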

For the trace, we see $\text{tr}(\mathbf{V}_{k+1})=\text{tr}(\mathbf{H}\mathbf{V}_{k})=\text{tr}(\mathbf{V}_k)-2\text{tr}(\mathbf{u}^\textsf{T}\mathbf{V}_k\mathbf{u})$, giving the required expression.

For the eigenspace, suppose that the input $\mathbf{V} \in \mathcal{H}_p$ is a product of $p$ Householder matrices for $p\leq n$. Recall that every $n \times n$ orthogonal matrix is a product of at most $n$ Householder matrices \cite{uhlig2001constructive}. Here, we distinguish the following two cases
\begin{enumerate}
    \item \emph{Case 1:} $p+k$ is odd: In this case, $\mathbf{V}_k =\mathbf{H}_{k}\mathbf{H}_{k-1}\ldots \mathbf{H}_1 \mathbf{V} $ is product of $p+k$ Householder matrices, and so $\det (\mathbf{V}_k) = (-1)^{p+k}=-1$. It follows that $-1$ is an eigenvalue of $\mathbf{V}_k$ and $(\mathbf{V}_k)_\text{sym}$.  Thus, by the construction of the algorithm, the vector $\mathbf{u}$ picked at step $k$ satisfies $(\mathbf{V}_k)_{\text{sym}}\mathbf{u} = \mathbf{V}_k \mathbf{u} = -\mathbf{u}$. From Lemma \ref{lem:spectral2}, it follows that the $\mathcal{E}^1_{\mathbf{V}_{k+1}} = \mathcal{E}^1_{\mathbf{V}_k} + \cup_\alpha\{\alpha u\}$, and the Lemma follows.
    \item \emph{Case 2:} $p+k$ is even: If $\lambda_{\min}=\lambda_{\min}(\mathbf{V}_k)_{\text{sym}}=-1$, we make a similar argument to the above case. So suppose $\lambda_{\min}>-1$.  Note that $\mathbf{V}_{k+1}$ is a product of $p+k+1$ Householder matrices and hence has an eigenvalue of $-1$ with odd multiplicity. For complex unit $\lambda$, let $k_\lambda = \text{mult}(\mathbf{V}_{k+1}, \lambda) - \text{mult}(\mathbf{V}_k, \lambda)$. Note that by definition $\sum k_{\lambda} = 0$ (the sum is over eigenvalues of either $\mathbf{V}_k$ or $\mathbf{V}_{k+1}$). From Lemma \ref{lem:spectral2}, since all eigenvectors other than $\mathbf{u}$ and $\bar{\mathbf{u}}$ are retained, so $|k_\lambda| \leq 2$. Also, for any $\lambda$ other than those associated with $\mathbf{u}, \bar{\mathbf{u}}$, the multiplicity cannot decrease, so these $k_\lambda$ are non-negative; in particular $k_1\geq 0$ since $\lambda_{\min}\neq 1$. As observed previously, $k_{-1}$ is odd (positive); Furthermore, because $\mathbf{V}_{k+1}$ and $\mathbf{V}_k$ are real orthogonal matrices, the complex eigenvalues occur in conjugate pairs, and so $k_\lambda = k_{\bar{\lambda}}$. 
          Hence, the only way these conditions hold is with $k_{-1}=1$ and $k_{1}=1,$ which means the eigenspace for eigenvalue $1$ increases in dimension by $1$.
\end{enumerate}

\subsubsection{Proof of Theorem \ref{thm:correctness} and Theorem \ref{thm:errorbound}}
Algorithm \ref{alg:ortho_prod_of_Householders} terminates when $\hat{\mathbf{V}}=\mathbf{V}$ or equivalently when $\mathbf{V}_{m+1}=\mathbf{I}$. In this case, by construction, $\hat{\mathbf{V}}$ is a product of $m$ Householder matrices. Now suppose by way of contradiction $\mathbf{V}\in \mathcal{H}_p$ for some $p<m$. Then as observed before, the eigenspace $\mathcal{E}^1_{\mathbf{V}}$ is at least $n-p$ dimensional. By Lemma \ref{lem:trace3} above, at iteration $p$, $\mathbf{V}_{p+1}$ would have an $n-$dimensional eigenspace; so the algorithm should have terminated before step $p$. \\

For Theorem \ref{thm:errorbound}, we have ${\text{min}} \ \lVert \mathbf{V} - \hat{\mathbf{V}} \rVert_F^2 ={\text{min}} \ 2[n - \text{tr}(\hat{\mathbf{V}}^\textsf{T}\mathbf{V})]=2[n - \text{tr}({\mathbf{V}_{k+1}})]$. Note that Lemma \ref{lem:trace3} gives a recursion on $\text{tr}(\mathbf{V}_{k+1})$ in terms of the smallest eigenvalue of $(\mathbf{V}_{k})_\text{sym}$. Similar to the proof of Lemma \ref{lem:trace3}, we note that
the smallest eigenvalue of $(\mathbf{V}_k)_{\text{sym}}$ is $-1$ for every other iteration (when $p+k$ as in Proof of Lemma \ref{lem:trace3} is odd). In any iteration, no new eigenvalues other than $\pm 1$ are introduced. Thus $\lambda_{\min}(\mathbf{V}_k)_{\text{sym}}$ is the $k/2^{\textsf{th}}$ smallest eigenvalue  of $\mathbf{V}_\text{sym}$.
\[
    \lVert \mathbf{V} - \hat{\mathbf{V}_m} \rVert_F^2 = \\
    2\left(n - \operatorname{tr}(\mathbf{V}) - 2\sum\lambda_{\min}(V_k)_{\text{sym}}\right),
\]

so the statement of the theorem follows (note that the eigenvalues appear in pairs, allowing us to rewrite as in the statement of the theorem).

\subsection{Illustration of Householder Recovery}
We illustrate the recovery of an arbitrary orthogonal matrix $\mathbf{V} \in \mathcal{H}_2$, i.e., we find matrcies $\mathbf{H}_3, \ \mathbf{H}_4$ such that
$\mathbf{V}=\mathbf{H}_1\mathbf{H}_2=\mathbf{H}_3 \mathbf{H}_4$.

\(\mathbf{V} = (\mathbf{I} - 2\mathbf{u}_1\mathbf{u}_1^\textsf{T})(\mathbf{I} - 2\mathbf{u}_2\mathbf{u}_2^\textsf{T}) = \mathbf{I} - 2\mathbf{u}_1\mathbf{u}_1^\textsf{T} - 2\mathbf{u}_2\mathbf{u}_2^\textsf{T} + 4 (\mathbf{u}_1^\textsf{T} \mathbf{u}_2) \mathbf{u}_1 \mathbf{u}_2^\textsf{T}\).
Let \(\mathbf{u}_1^\textsf{T} \mathbf{u}_2 = k\) for notational simplicity. Then,
\((\mathbf{V} + \mathbf{V}^\textsf{T})/2 = \mathbf{I} - 2\mathbf{u}_1\mathbf{u}_1^\textsf{T} - 2\mathbf{u}_2\mathbf{u}_2^\textsf{T} + 2k (\mathbf{u}_1\mathbf{u}_2^\textsf{T} + \mathbf{u}_2\mathbf{u}_1^\textsf{T})\).
Now,
\(\mathbf{V}_{\text{sym}} \mathbf{u}_1 = (\mathbf{I} - 2\mathbf{u}_1\mathbf{u}_1^\textsf{T} - 2\mathbf{u}_2\mathbf{u}_2^\textsf{T} + 2k (\mathbf{u}_1 \mathbf{u}_2^\textsf{T} + \mathbf{u}_2 \mathbf{u}_1^\textsf{T})) \mathbf{u}_1=-\mathbf{u}_1+2k^2\mathbf{u}_1\). Similarly, \(\mathbf{V}_{\text{sym}} \mathbf{u}_2 = (\mathbf{I} - 2\mathbf{u}_1\mathbf{u}_1^\textsf{T} - 2\mathbf{u}_2\mathbf{u}_2^\textsf{T} + 2k (\mathbf{u}_1 \mathbf{u}_2^\textsf{T} + \mathbf{u}_2 \mathbf{u}_1^\textsf{T})) \mathbf{u}_2=-\mathbf{u}_2+2k^2\mathbf{u}_2\). Thus, $\mathbf{u}_1$ and $\mathbf{u}_2$ are eigenvectors with the same eigenvalue $(-1+2k^2)$. Hence, any vector in the span of $\mathbf{u}_1$ and $\mathbf{u}_2$, i.e., $\alpha \mathbf{u}_1+ \beta \mathbf{u}_2$ is also an eigenvector. Imposing the unit norm constraint on the eigenvector, $\lVert \alpha \mathbf{u}_1+ \beta \mathbf{u}_2 \rVert^2 =1$. Thus, $\alpha^2+\beta^2 +2\alpha \beta k=1.$ Consider any vector (say, $\mathbf{w}$) orthogonal to both $\mathbf{u}_1$ and $\mathbf{u}_2$. Thus, $\mathbf{V_{\text{sym}}w}=\mathbf{w}$. All such vectors have eigenvalue $1$. Thus, the minimum eigenvalue is $(-1+2k^2)$, since $-1 \leq k \leq 1$. Set $\mathbf{H}_3=\mathbf{I}-2(\alpha \mathbf{u}_1+ \beta \mathbf{u}_2)(\alpha \mathbf{u}_1+ \beta \mathbf{u}_2)^\textsf{T}=\mathbf{I}-2(\alpha^2 \mathbf{u}_1\mathbf{u}_1^\textsf{T}+\alpha \beta (\mathbf{u}_1\mathbf{u}_2^\textsf{T}+\mathbf{u}_2\mathbf{u}_1^\textsf{T})+\beta^2 \mathbf{u}_1\mathbf{u}_1^\textsf{T})$. Now, \(\mathbf{H}_3^{-1} \mathbf{V} = \mathbf{H}_3 \mathbf{H}_1 \mathbf{H}_2 = \mathbf{I}
- 2\mathbf{u}_1\mathbf{u}_1^\textsf{T}(1 - \alpha^2 - 2\alpha\beta k)
- 2\mathbf{u}_2\mathbf{u}_2^\textsf{T}(1 - \beta^2 + 2\alpha\beta k + 4\beta^2 k^2)
- 2\mathbf{u}_1\mathbf{u}_2^\textsf{T}(-2k + 2k\alpha^2 - \alpha\beta + 4\alpha\beta k^2)
- 2\mathbf{u}_2\mathbf{u}_1^\textsf{T}(-\alpha \beta - 2 \beta^2 k)\). Note that $(-2k + 2k\alpha^2 - \alpha\beta + 4\alpha\beta k^2)
    =- \alpha\beta + 2k(\alpha^2+2 \alpha \beta k -1)= -\alpha \beta - 2 \beta^2 k$, by using the unit norm condition. Thus, the coefficients
of both $2\mathbf{u}_1\mathbf{u}_2^\textsf{T}$ and $ 2\mathbf{u}_2\mathbf{u}_1^\textsf{T}$
are equal to $(-\alpha\beta - 2\beta^2 k)$. \\

Finally, consider the vector $\mathbf{v}= -\beta \mathbf{u}_1+(\alpha + 2\beta k)\mathbf{u}_2$.
Let $\mathbf{H}_4=\mathbf{I}-2(-\beta \mathbf{u}_1+ (\alpha+2\beta k)\mathbf{u}_2)(-\beta \mathbf{u}_1+ (\alpha+2\beta k)\mathbf{u}_2)^\textsf{T}$.
It can be verified that $\mathbf{H}_3\mathbf{H}_1\mathbf{H}_2=\mathbf{H}_4$. All that is left is to analyze whether $-\beta \mathbf{u}_1+(\alpha + 2\beta k)\mathbf{u}_2$
is an eigenvector of $(\mathbf{H}_3\mathbf{H}_1\mathbf{H}_2)_{\text{sym}}=\mathbf{I}-2\mathbf{u}_1\mathbf{u}_1^\textsf{T} - 2\mathbf{u}_2\mathbf{u}_2^\textsf{T}
    -2(\alpha\beta + 2\beta^2 k)(\mathbf{u}_1\mathbf{u}_2^\textsf{T}+\mathbf{u}_2\mathbf{u}_1^\textsf{T})$. $(\mathbf{H}_3\mathbf{H}_1\mathbf{H}_2)_{\text{sym}}-\beta \mathbf{u}_1+(\alpha + 2\beta k)\mathbf{u}_2
    =-\beta \mathbf{u}_1+2\beta \mathbf{u}_1(1-\alpha^2-2\alpha\beta k)+2\beta k \mathbf{u}_2(1-\beta^2 +2 \alpha\beta k + 4\beta^2 k^2)
    -2 \beta \mathbf{u}_2 (\alpha \beta + 2\beta^2 k) - 2\beta k \mathbf{u}_1(\alpha \beta + 2 \beta^2 k) + \mathbf{u}_2(\alpha+2 \beta k)
    -2k\mathbf{u}_1(\alpha + 2 \beta k)(1-\alpha^2 - 2\alpha \beta k)-2\mathbf{u}_2(\alpha+2\beta k)(1-\beta^2 + 2\alpha\beta k + 4\beta^2 k^2)
    +2k\mathbf{u}_2(\alpha + 2\beta k)(\alpha\beta + 2 \beta^2 k)+2\mathbf{u}_1(\alpha+2\beta k)(\alpha \beta + 2\beta^2 k)$.
Clubbing the coefficents of $\mathbf{u}_1$ and $\mathbf{u}_2$, we get
$\mathbf{u}_1(-\beta + 2\beta(1-\alpha^2-2\alpha\beta k)-2\beta k(\alpha \beta + 2\beta^2 k)-2k(\alpha+2\beta k)(1-\alpha^2 - 2\alpha \beta k)+2(\alpha+2\beta k)(\alpha \beta + 2\beta^2 k))+
    \mathbf{u}_2(2\beta k(1-\beta^2 +2 \alpha\beta k + 4\beta^2 k^2)-2\beta(\alpha \beta + 2\beta^2 k)+(\alpha + 2\beta k)-2(\alpha+2\beta k)(1-\beta^2 + 2\alpha\beta k + 4\beta^2 k^2)+2k(\alpha + 2\beta k)(\alpha\beta + 2 \beta^2 k))$.
$=\mathbf{u}_1(-\beta + 2\beta^3-2\alpha\beta^2 k -4\beta^3 k^2-2\alpha\beta^2 k-4\beta^3 k^2+2\alpha^2\beta+4\alpha\beta^2 k+4\alpha\beta^2 k+8\beta^3 k^2)$
$+\mathbf{u}_2(2\beta k (\alpha+2\beta k)^2-2\beta^2(\alpha+2\beta k)+(\alpha+2\beta k)-2(\alpha+2\beta k)^3)$
$=\mathbf{u}_1(-\beta + 2\beta^3+4\alpha \beta^2 k+ 2\alpha^2 \beta)+
    \mathbf{u}_2(\alpha+2\beta k)(4\beta k (\alpha+2\beta k)-2\beta^2+1-2(\alpha+2\beta k)^2)=
    -\beta \mathbf{u}_1 (1-2\beta^2 -2\alpha^2-4\alpha \beta k)+ (\alpha+2\beta k)\mathbf{u}_2(1-2\beta^2-2\alpha^2-4\alpha \beta k)$
$=-\beta \mathbf{u}_1+ (\alpha+2\beta k)\mathbf{u}_2(-1)$ (using the unit norm condition). \\

Therefore, the vector $\mathbf{v}= -\beta \mathbf{u}_1+(\alpha + 2\beta k)\mathbf{u}_2$ is an eigenvector of $(\mathbf{H}_3\mathbf{H}_1\mathbf{H}_2)_{\text{sym}}$ with eigenvalue $-1$, which is the minimum eigenvalue.
Moreover, the eigenvector corresponding to this eigenvalue, which turns out to be the Householder vector for $\mathbf{H}_4$, is unique up to sign.
Thus, note the following: $\mathbf{V_{\text{sym}}}$ had the eigenvalue $(-1+2k^2)$ with multiplicity $2$ and the eigenvalue $1$ with multiplicity $(n-2)$.
If we consider the symmetric part of $\mathbf{H}_3\mathbf{H}_1\mathbf{H}_2,$ it tur
ns out to be equal to
$\mathbf{H}_3\mathbf{H}_1\mathbf{H}_2.$ Once again, any vector orthogonal to both $\mathbf{u}_1,\mathbf{u}_2$ is
still an eigenvector with eigenvalue $1.$ However, unlike the symmetric part of $\mathbf{H}_1\mathbf{H}_2$,
whose minimum eigenvalue had multiplicity $2,$ the minimum eigenvalue in this case ($-1$), which corresponds to the
specific eigenvector $-\beta \mathbf{u}_1+(\alpha + 2\beta k)\mathbf{u}_2$ has multiplicity $1.$
Since the matrix under consideration is symmetric, the space spanned by the eigenvectors has dimension $n$.
Thus, there exists another eigenvector in the plane of $\mathbf{u}_1, \mathbf{u}_2$, orthogonal to the above Householder vector.
It can further be verified that this eigenvector has eigenvalue $1$. Thus, there are infinite solutions to the equation
$\mathbf{H}_1\mathbf{H}_2=\mathbf{H}_3\mathbf{H}_4$, since $\alpha$ and $\beta$ can be chosen arbitrarily, as long as the unit norm condition is satisfied.

\begin{figure}[htbp]
    \centering
    \begin{subfigure}[t]{0.45\linewidth}
        \includegraphics[width=\linewidth]{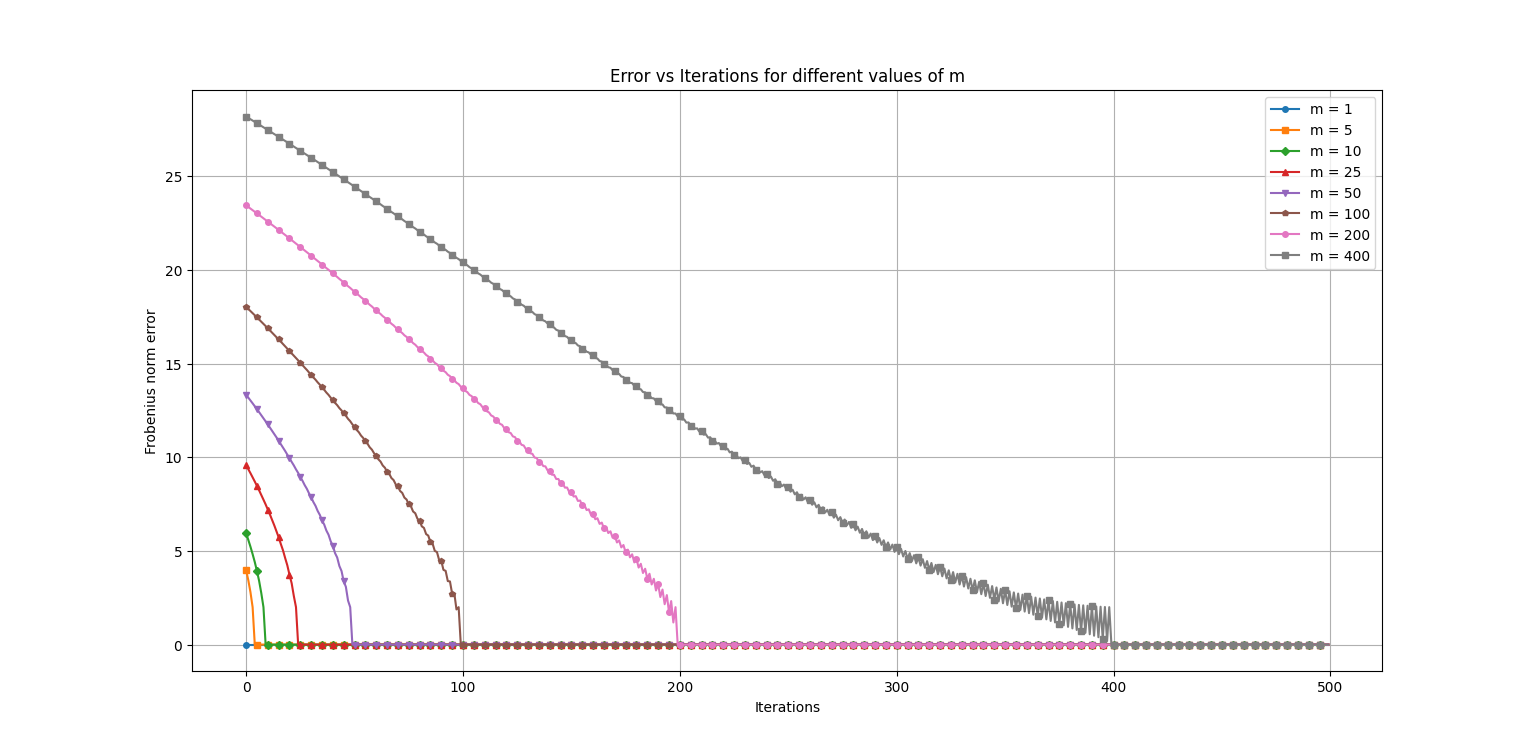}
        \caption{The Householder vectors are generated from an arbitrary Gaussian distribution, followed by normalization.}
    \end{subfigure}
    \hfill
    \begin{subfigure}[t]{0.45\linewidth}
        \includegraphics[width=\linewidth]{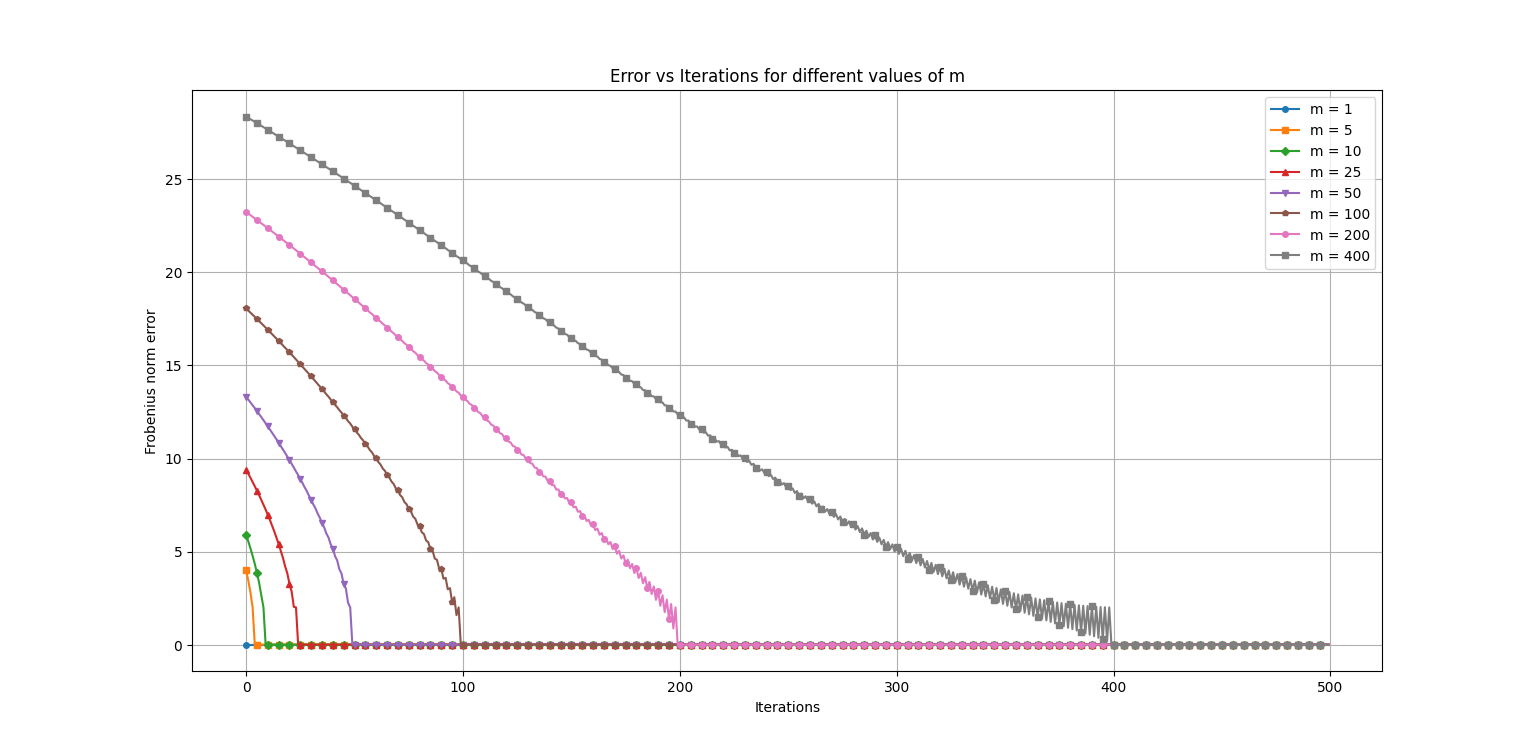}
        \caption{A small fraction of entries $k=0.02$ is chosen randomly. These entries are drawn from an arbitrary Gaussian distribution, while all other entries are set to $0$. The vector is then normalized.}
    \end{subfigure}

    \vspace{0.5cm}

    \begin{subfigure}[t]{0.45\linewidth}
        \includegraphics[width=\linewidth]{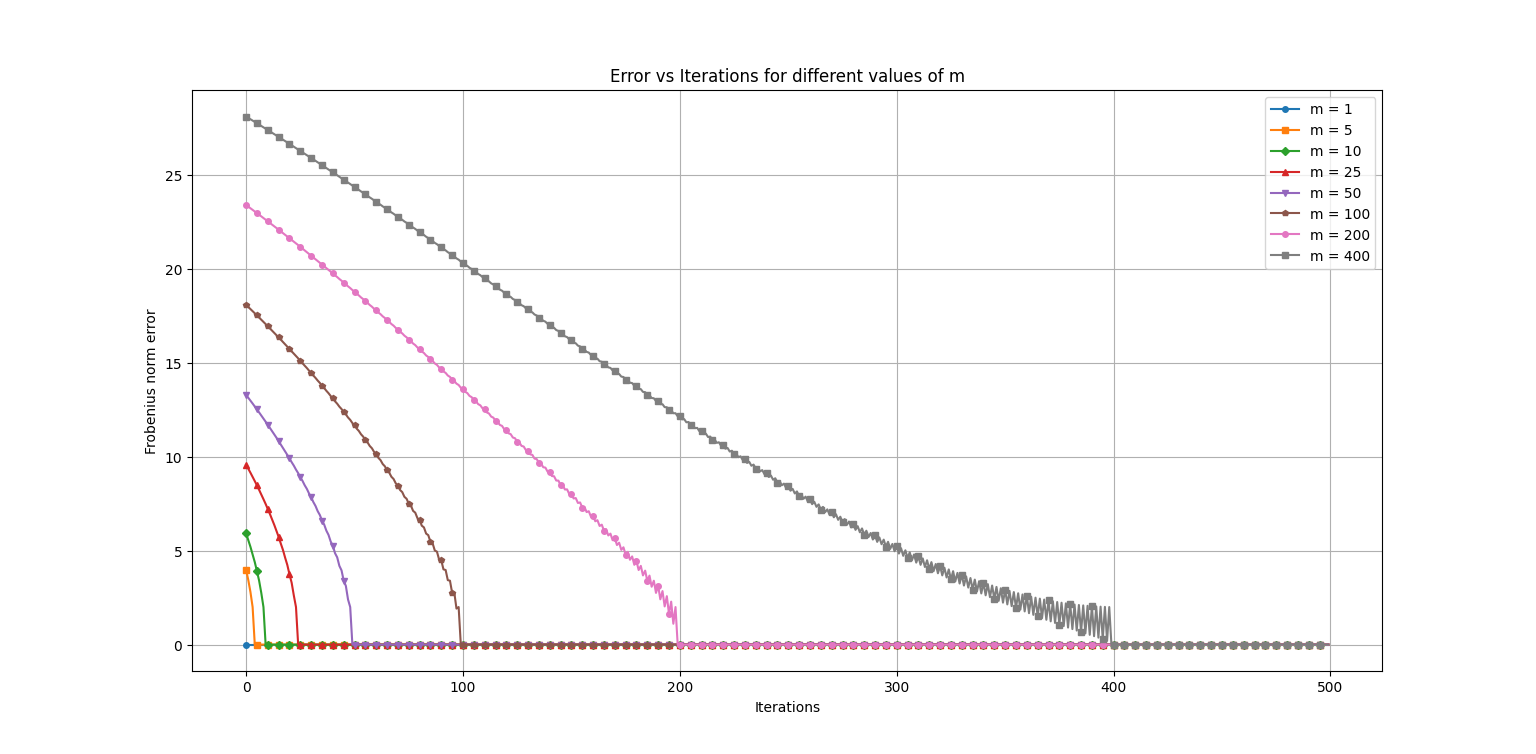}
        \caption{The first Householder vector is generated from an arbitrary Gaussian distribution, followed by normalization.
            The following Householder vectors are generated by retaining half of the entries from the previous Householder vector
            and generating the remaining entries from an arbitrary Gaussian distribution. The vector is then normalized.}
    \end{subfigure}
    \hfill
    \begin{subfigure}[t]{0.45\linewidth}
        \includegraphics[width=\linewidth]{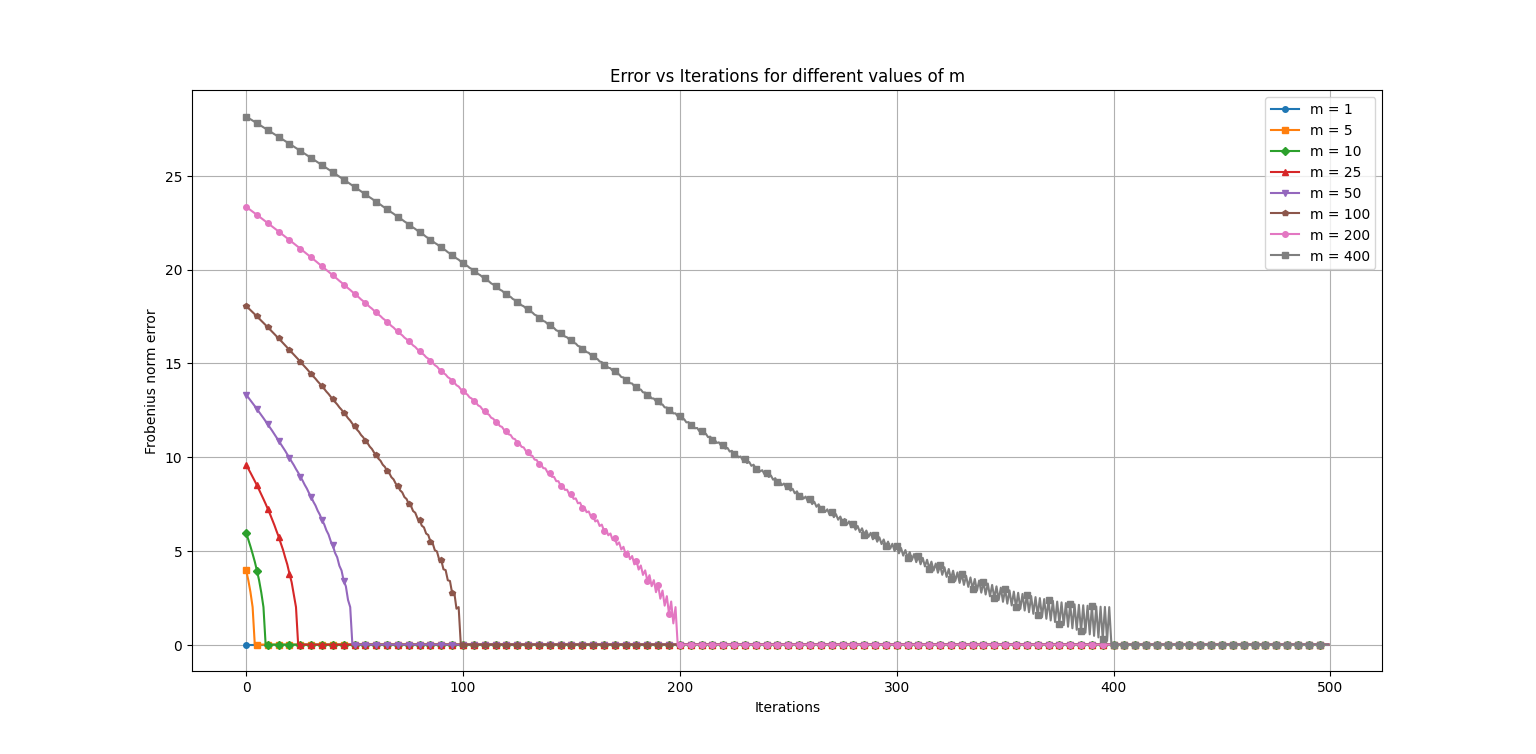}
        \caption{The Householder vectors are generated from an arbitrary Bernoulli distribution, with entries $1$ or $-1$ with some probability, followed by normalization.}
    \end{subfigure}

    \vspace{0.5cm}

    \begin{subfigure}[t]{0.45\linewidth}
        \includegraphics[width=\linewidth]{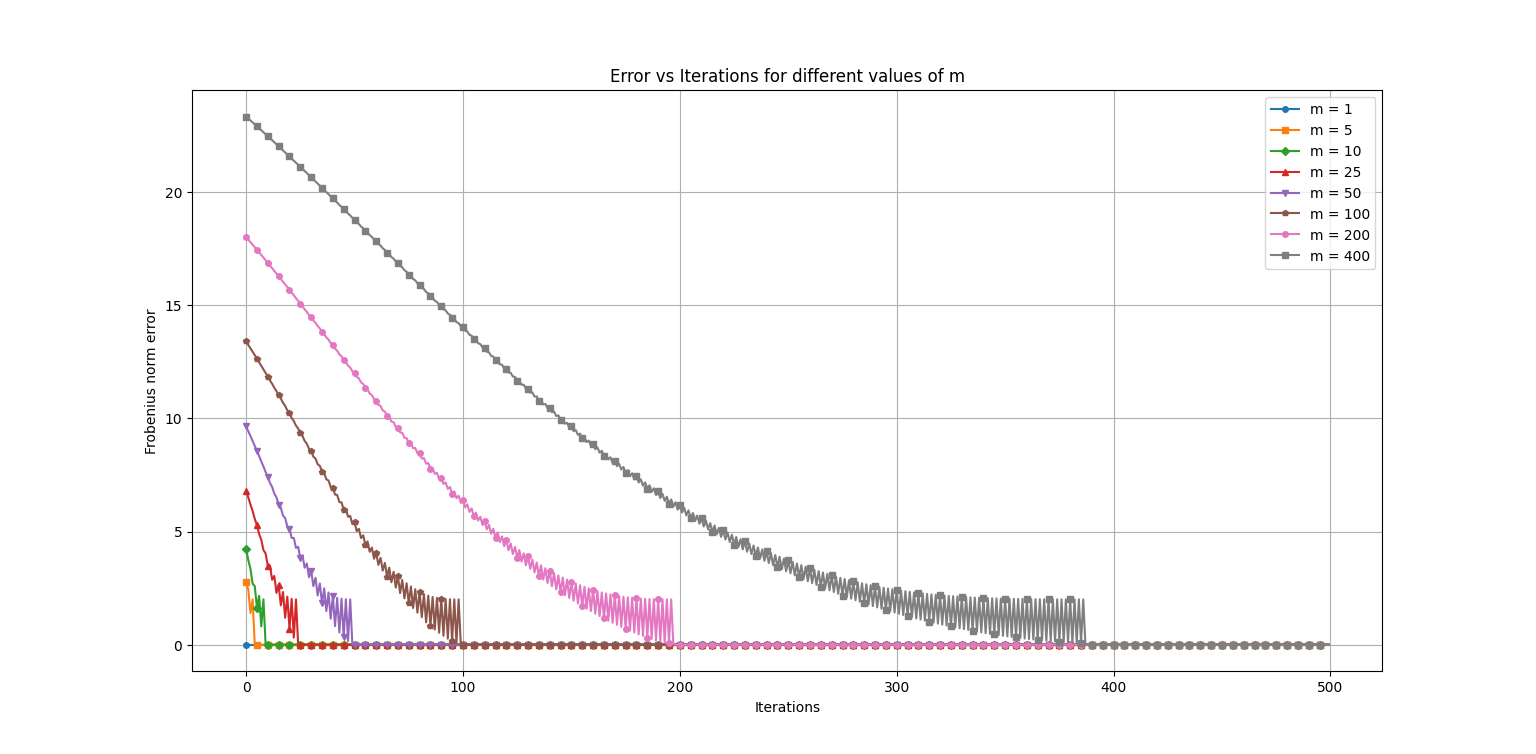}
        \caption{The Householder vectors are generated from an arbitrary Exponential distribution, followed by normalization.}
    \end{subfigure}
    \hfill
    \begin{subfigure}[t]{0.45\linewidth}
        \includegraphics[width=\linewidth]{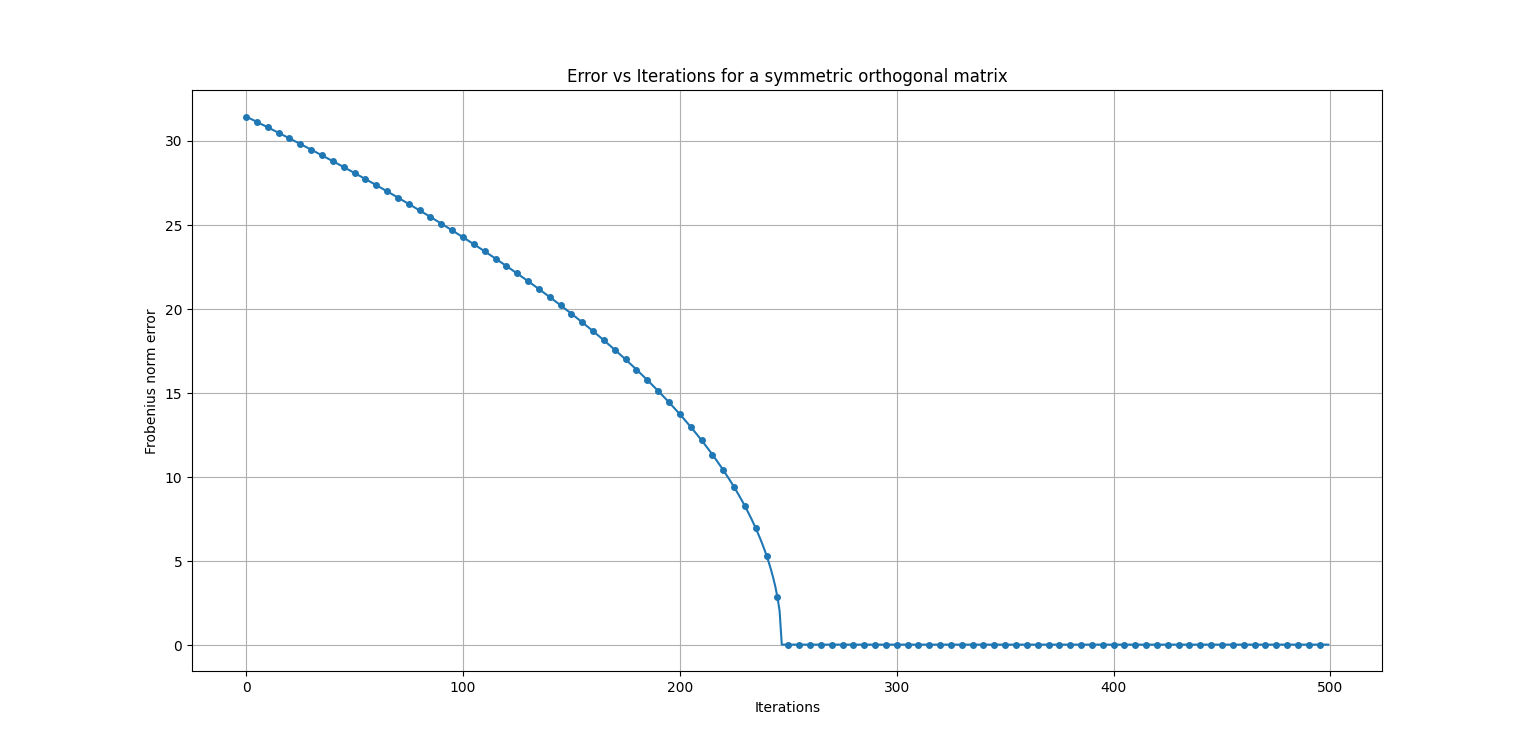}
        \caption{Error vs. iteration plot for symmetric orthogonal matrices.}
    \end{subfigure}

    \caption{Error vs. iteration plots for various distributions.}
    \label{fig:error_vs_iter}
\end{figure}

\begin{algorithm}[!ht]
    \small
    \caption{Approximating an orthogonal matrix in $\mathcal{H}_m$}
    \label{alg:ortho_prod_of_Householders}
    \textbf{Input:} $\mathbf{V},$ an orthogonal matrix, $\epsilon$, a small positive constant \\
    \textbf{Output:} $\mathbf{H}_1,\mathbf{H}_2,\ldots,\mathbf{H}_m$, $m$, the number of Householder matrices $\mathbf{V}$ is a product of
    \begin{algorithmic}[1]
        \State Set $k=0$
        \State $\hat{\mathbf{V}}=\mathbf{I}$
        \While{$\lVert \hat{\mathbf{V}} - \mathbf{V} \rVert_F$ $> \epsilon$ and $k<n$}
        \State Find the Eigen-decomposition of $\mathbf{(V}_k)_{\text{sym}}=(\mathbf{V}_k+\mathbf{V}_k^\textsf{T})/2$
        \State $\mathbf{H}_k \gets \mathbf{I}-2\mathbf{uu}^\textsf{T}$, where $\mathbf{u}$ is the eigenvector corresponding to the minimum eigenvalue $\lambda_{\min}$ of $\mathbf{(V}_k)_{\text{sym}}$.
        \State $\mathbf{V_{k+1}} \gets \mathbf{H}_k\mathbf{V}_k$
        \State $\hat{\mathbf{V}} \gets \hat{\mathbf{V}}\mathbf{H}_k$
        \State $k \gets k+1$
        \EndWhile
        \State Output: $\mathbf{H}_1,\mathbf{H}_2, \ldots \mathbf{H}_m$, $m$
    \end{algorithmic}
\end{algorithm}

\subsection{Simulations}

The orthogonal matrix $\mathbf{V}$ is chosen to be a $500 \times 500$ matrix generated as a
product of $m$ Householder matrices (for varying $m$ $[1,5,10,25,50,100,200,400]$). The Householder vectors corresponding to the
Householder matrices are generated from various distributions, sparsity levels, and relations between consecutive Householder matrices.
The figures displayed are for the aforementioned cases. The markers have been placed once every $5$ iterations. Note that
the error always goes down to $0$. Note, $\epsilon=0.05$ (refer Algorithm \ref{alg:ortho_prod_of_Householders}). If the original spectrum of $\mathbf{V}_{\text{sym}}$ has only a few positive eigenvalues, the onset of oscillations
is earlier, as is evident from the plot (e) (for the case where the Householder vectors were generated from an exponential distribution; see Figure \ref{fig:error_vs_iter}).

\section{Learning the Householder Dictionary with 2 samples}
\label{sec:3}
Given we can approximate $\mathbf{V} (\in \mathcal{H}_p)$ in $\mathcal{H}_m \ (m<p)$, it is imperative that we analyze the learnability of the Householder dictionary. We next discuss the main result for this problem.
\begin{theorem}  (\emph{Zero error achievability}) For the general model, $\mathbf{Y}=\mathbf{H}\mathbf{X}$, where $\mathbf{H}=\mathbf{I}-2\mathbf{uu}^\textsf{T}$ and $\mathbf{X}$ is an arbitrary binary matrix, $\mathbf{H}, \mathbf{X}$ can be \emph{uniquely} recovered with $p=2$ columns in $\mathbf{Y}$.
    \label{thm:zero_error}
\end{theorem}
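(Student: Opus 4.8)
The plan is to exploit the single algebraic fact that a Householder reflection is symmetric and involutory, so each observed column satisfies $\mathbf{x}_i = \mathbf{H}\mathbf{y}_i = \mathbf{y}_i - 2(\mathbf{u}^\textsf{T}\mathbf{y}_i)\mathbf{u}$ for $i=1,2$; in particular the residual $\mathbf{y}_i - \mathbf{x}_i = 2(\mathbf{u}^\textsf{T}\mathbf{y}_i)\mathbf{u}$ is a scalar multiple of the unknown reflector $\mathbf{u}$. Since $\mathbf{y}_1-\mathbf{x}_1$ and $\mathbf{y}_2-\mathbf{x}_2$ are therefore collinear, there is a scalar $s=(\mathbf{u}^\textsf{T}\mathbf{y}_1)/(\mathbf{u}^\textsf{T}\mathbf{y}_2)$ with
$$
\mathbf{y}_1 - s\,\mathbf{y}_2 \;=\; \mathbf{x}_1 - s\,\mathbf{x}_2 .
$$
First I would set aside the degenerate configurations not covered by the ``general model'' — a zero column, $\mathbf{y}_1$ parallel to $\mathbf{y}_2$, or $\mathbf{u}$ orthogonal to a column so that $\mathbf{H}$ fixes it — and also record the isometry constraints $\lVert\mathbf{x}_i\rVert^2 = \lVert\mathbf{y}_i\rVert^2$ and $\mathbf{x}_1^\textsf{T}\mathbf{x}_2 = \mathbf{y}_1^\textsf{T}\mathbf{y}_2$, whose left-hand sides are non-negative integers because $\mathbf{x}_1,\mathbf{x}_2$ are binary.

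Next I would turn the displayed identity into a finite search. Because $\mathbf{x}_1,\mathbf{x}_2\in\{0,1\}^n$, the $j$-th coordinate of the right-hand side lies in $\{0,\,1,\,-s,\,1-s\}$, the value being determined by the pair $((\mathbf{x}_1)_j,(\mathbf{x}_2)_j)$; equating it to the known $(\mathbf{y}_1)_j - s(\mathbf{y}_2)_j$ forces $s$ to be one of at most four explicit rational expressions in $(\mathbf{y}_1)_j,(\mathbf{y}_2)_j$. Intersecting these constraints over all $n$ coordinates leaves at most four candidate values of $s$, computable in $O(n)$ time, and the true $s^\star$ is among them. For each candidate $s$ (away from the few exceptional values where the four targets collide, which I would handle separately), the coordinate patterns reconstruct $\mathbf{x}_1,\mathbf{x}_2$ uniquely; then $\mathbf{u}=(\mathbf{x}_1-\mathbf{y}_1)/\lVert\mathbf{x}_1-\mathbf{y}_1\rVert$ and $\mathbf{H}=\mathbf{I}-2\mathbf{uu}^\textsf{T}$, and a final check that $\mathbf{H}\mathbf{x}_i=\mathbf{y}_i$ discards spurious candidates. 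This establishes existence of a recovery (and yields the reconstruction algorithm).

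The real content is uniqueness, which I would prove by contradiction. Suppose two valid outputs give Householders $\mathbf{H}\ne\mathbf{H}'$ with $\mathbf{H}\mathbf{y}_i,\mathbf{H}'\mathbf{y}_i$ binary for $i=1,2$. Then $\mathbf{R}:=\mathbf{H}'\mathbf{H}$ is a product of two Householders, hence a planar rotation: it is the identity on $S^\perp$, where $S:=\mathrm{span}\{\mathbf{u},\mathbf{u}'\}$ has dimension at most $2$, and it rotates $S$ by a nonzero angle (nonzero since $\mathbf{H}\ne\mathbf{H}'$ forces $\mathbf{u}\ne\pm\mathbf{u}'$); moreover $\mathbf{R}$ carries the binary pair $(\mathbf{x}_1,\mathbf{x}_2)$ to the binary pair $(\mathbf{x}_1',\mathbf{x}_2')$, since $\mathbf{x}_i'=\mathbf{H}'\mathbf{y}_i=\mathbf{H}'\mathbf{H}\mathbf{x}_i=\mathbf{R}\mathbf{x}_i$. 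If $\mathbf{x}_i=\mathbf{x}_i'$ for both $i$, then $\mathbf{H}-\mathbf{H}'=2(\mathbf{u}'\mathbf{u}'^\textsf{T}-\mathbf{uu}^\textsf{T})$, which vanishes on $S^\perp$ and is invertible on $S$ (a nonzero traceless symmetric operator on the plane), annihilates $\mathbf{x}_1,\mathbf{x}_2$; hence $\mathbf{x}_i\in S^\perp$, so $\mathbf{u}^\textsf{T}\mathbf{x}_i=0$ and $\mathbf{y}_i=\mathbf{H}\mathbf{x}_i=\mathbf{x}_i$ for both columns — a configuration excluded above.

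In the remaining case, say $\mathbf{x}_1\ne\mathbf{x}_1'$; then $\mathbf{v}_1:=\mathbf{x}_1-\mathbf{x}_1'=(\mathbf{I}-\mathbf{R})\mathbf{x}_1$ is a nonzero $\{-1,0,1\}$-vector lying in $S$ (since $\mathrm{Im}(\mathbf{I}-\mathbf{R})\subseteq S$), and likewise $\mathbf{v}_2:=\mathbf{x}_2-\mathbf{x}_2'\in S$. I would then combine (i) the rigidity of the rotation $\mathbf{R}$ on the plane $S$ — which ties each $\mathbf{v}_i$ to the $S$-component of $\mathbf{x}_i$ and to the fixed rotation angle — with (ii) the preserved norms and overlap $\lVert\mathbf{x}_i\rVert=\lVert\mathbf{x}_i'\rVert$, $\mathbf{x}_1^\textsf{T}\mathbf{x}_2=\mathbf{x}_1'^\textsf{T}\mathbf{x}_2'$, and (iii) the fact that $\mathbf{H},\mathbf{H}'$ are \emph{single} reflections (so $\mathbf{R}$'s eigenstructure is exactly that of a product of two distinct Householders), together with the coordinate bookkeeping above, to force a contradiction — invoking the mild non-degeneracy on $\mathbf{X}$ implicit in the statement if needed. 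I expect this last step — ruling out every way two binary vectors can be related by a planar rotation that is itself a ratio of two Householder reflectors — to be the main obstacle, and I would organize it by a short case analysis on the support patterns and relative positions of $\mathbf{v}_1,\mathbf{v}_2$ inside $S$.
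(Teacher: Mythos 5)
Your setup is sound and in places sharper than the paper's: the observation that $\mathbf{H}$ is an involution, so $\mathbf{x}_i=\mathbf{H}\mathbf{y}_i$ and the residuals $\mathbf{y}_i-\mathbf{x}_i$ are both multiples of $\mathbf{u}$, reduces the search to at most a handful of candidate values of $s$ per coordinate --- a polynomial-time enumeration, whereas the paper's Algorithm 2 brute-forces all $2^n$ binary columns. Likewise, recasting two competing solutions as a planar rotation $\mathbf{R}=\mathbf{H}'\mathbf{H}$ that must carry one binary pair to another is a cleaner frame than the paper's bookkeeping, which partitions coordinates into four sets $P_1,\dots,P_4$ according to the joint support pattern of the two candidate coefficient columns and grinds out identities ($\lvert P_2\rvert=\lvert P_4\rvert$, $\delta_2/\delta_1=c_1/c_2$, etc.) relating the two candidate reflectors entrywise.

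However, the proof is not complete, and the missing piece is exactly the content of the theorem. Everything up to the contradiction setup (existence of a consistent recovery, the reduction to a planar rotation, the easy subcase $\mathbf{x}_i=\mathbf{x}_i'$) is the routine part; the theorem lives or dies on showing that no nontrivial rotation in a $2$-plane $S=\mathrm{span}\{\mathbf{u},\mathbf{u}'\}$ can map \emph{two distinct} binary vectors to binary vectors while arising as $\mathbf{H}'\mathbf{H}$ for unit reflectors, and you explicitly defer this step (``I expect this last step \dots to be the main obstacle''), listing ingredients (norm preservation, preserved inner product, the constraint that $\mathbf{v}_i=\mathbf{x}_i-\mathbf{x}_i'\in S\cap\{-1,0,1\}^n$) without deriving the contradiction. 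Note that a single column genuinely does admit spurious solutions (the paper's Table 1 shows several unit-norm $\mathbf{u}$'s consistent with column 1 alone), so the two-column interaction must be used in an essential, quantitative way; this is what the paper's Lemmas 4 and 5 do, culminating in the argument that an identical spurious $\mathbf{u}_2$ from both columns forces the two ground-truth columns of $\mathbf{X}$ to coincide. Until you carry out the analogous elimination in your rotation picture, the uniqueness claim --- and hence the theorem --- is unproven. (A secondary caveat: you quarantine configurations such as $\mathbf{u}\perp\mathbf{x}_i$ as ``not covered by the general model,'' but the theorem asserts recovery for an \emph{arbitrary} binary $\mathbf{X}$, so any such exclusions need to be stated and justified rather than assumed away.)
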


The proof of Theorem \ref{thm:zero_error} proceeds by a brute force elimination of possibilities on the columns of $\mathbf{X}$: this is summarized in Algorithm \ref{find_HX_exponential}. We show that such an elimination uniquely identifies the vector $\mathbf{u}$ (Theorem \ref{thm:zero_error}) with $2$ columns. This implicitly also shows that any solution to the problem $\mathbf{Y}=\mathbf{H}\mathbf{X}$ is unique if it exists. This may not hold when the assumption on binary $\mathbf{X}$ is removed. Also note that while brute-force elimination succeeds, it does so with exponential time complexity.

\begin{theorem} (Non-uniqueness) If $\mathbf{X}$ is an arbitrary (non-binary) matrix, $\mathbf{H}, \mathbf{X}$ \emph{cannot} be \emph{uniquely} recovered (even up to permutation and sign) with any
    number of columns $p$. Thus, the assumption that $\mathbf{X}$ is a binary matrix is justified for recovery of the Householder dictionary.
    \label{thm:non-uniqueness}
\end{theorem}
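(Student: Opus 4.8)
The plan is to prove non-uniqueness by a direct counterexample construction: for \emph{any} candidate Householder dictionary there is a genuinely different Householder dictionary explaining the same data. The key observation is that for a fixed Householder matrix $\mathbf{H}$, the linear map $\mathbf{X}\mapsto \mathbf{H}\mathbf{X}$ on $\mathbb{R}^{n\times p}$ is a bijection, with inverse again $\mathbf{X}\mapsto\mathbf{H}\mathbf{X}$ since $\mathbf{H}^2=\mathbf{I}$. Hence, for any data matrix $\mathbf{Y}$ and any Householder matrix $\mathbf{H}'$, the pair $(\mathbf{H}',\mathbf{X}')$ with $\mathbf{X}'=\mathbf{H}'\mathbf{Y}$ satisfies $\mathbf{Y}=\mathbf{H}'\mathbf{X}'$; because $\mathbf{X}$ is unconstrained (arbitrary, non-binary), $\mathbf{X}'$ is an admissible coefficient matrix. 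In particular, starting from a true factorization $\mathbf{Y}=\mathbf{H}\mathbf{X}$, every Householder $\mathbf{H}'$ yields a valid alternative $\mathbf{Y}=\mathbf{H}'(\mathbf{H}'\mathbf{H}\mathbf{X})$, and the whole argument never uses $p$, so it holds for any number of columns.

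The only remaining issue is to rule out the inherent ambiguity, i.e. to ensure that not all such $\mathbf{H}'$ coincide with the true $\mathbf{H}$ up to column permutation and sign. Fix $\mathbf{H}=\mathbf{I}-2\mathbf{u}\mathbf{u}^\textsf{T}$. The set $\{\mathbf{H}\mathbf{P}\mathbf{S}\}$, ranging over permutation matrices $\mathbf{P}$ and diagonal sign matrices $\mathbf{S}$, has at most $2^n n!$ elements, whereas for $n\ge 2$ the set of $n\times n$ Householder matrices is a continuum (parameterized by $S^{n-1}$ modulo $\pm$). Therefore there exists a Householder $\mathbf{H}'$ outside this finite orbit, and for such an $\mathbf{H}'$ the factorizations $(\mathbf{H},\mathbf{X})$ and $(\mathbf{H}',\mathbf{H}'\mathbf{H}\mathbf{X})$ remain distinct after quotienting by permutation and sign. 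For concreteness I would also record an explicit $2\times 2$ witness: with $\mathbf{H}=\operatorname{diag}(1,-1)$ and $\mathbf{H}'=\mathbf{I}-2\mathbf{u}'\mathbf{u}'^\textsf{T}$ for $\mathbf{u}'=(\cos\theta,\sin\theta)^\textsf{T}$, $\theta$ not an integer multiple of $\pi/4$, the matrix $\mathbf{H}'$ has all entries nonzero and none equal to $\pm1$, hence is not a signed permutation of $\mathbf{H}$, yet $\mathbf{Y}=\mathbf{H}\mathbf{X}=\mathbf{H}'(\mathbf{H}'\mathbf{H}\mathbf{X})$ for every $\mathbf{X}$.

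The argument is short, and the only step requiring genuine care — the "main obstacle," such as it is — is the permutation-and-sign bookkeeping: one must check that the competing solution cannot be absorbed into the inherent scaling/permutation symmetry of the factorization. I would settle this with the cardinality argument above (a finite orbit against a continuum of reflectors), reinforced by the explicit planar example so the claim is transparent. A minor, purely cosmetic caveat is the degenerate case $n=1$, where the only unit vector forces $\mathbf{H}=[-1]$ and the statement should be read for $n\ge 2$; I would flag this in passing.
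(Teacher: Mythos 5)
Your proposal is correct and rests on the same core observation as the paper's proof: because a Householder matrix is an invertible involution, any $\mathbf{H}'$ yields the admissible alternative factorization $\mathbf{Y}=\mathbf{H}'(\mathbf{H}'\mathbf{Y})$ once $\mathbf{X}$ is unconstrained, and the paper likewise exhibits two distinct $2\times 2$ Householder matrices explaining the same data column by column. If anything, your cardinality argument (finite permutation-and-sign orbit versus a continuum of reflectors) makes the ``not equivalent up to permutation and sign'' step more explicit than the paper, which presents one concrete pair and asserts the non-equivalence directly.
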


\begin{algorithm}[!ht]
    \small
    \caption{Finding $\mathbf{H}, \mathbf{X}$ for $\mathbf{Y}=\mathbf{HX}$ with zero error}
    \label{find_HX_exponential}
    \textbf{Input:} $\mathbf{Y}$ \\
    \textbf{Output:} $\mathbf{H}, \mathbf{X}$
    \begin{algorithmic}[1]
        \While{all $n$ length binary vectors have not been exhausted}
        \State Set the first column of $\mathbf{X}$ as a random n length binary vector
        \State Find $\mathbf{u}$
        \EndWhile
        \State Repeat the above for the second column of $\mathbf{X}$
        \State If any of the $\mathbf{u}$ vectors obtained from a vector in the first column
        of $\mathbf{X}$ match with that of a $\mathbf{u}$ vector obtained from a random vector in the second column
        of $\mathbf{X}$, then set $\mathbf{H}$ as the corresponding Householder matrix, which is generated from $\mathbf{u}$
    \end{algorithmic}
\end{algorithm}

\subsection{Proof of Theorem \ref{thm:non-uniqueness}}

\begin{proof}
    We will show that there exists a pair of matrices,
    $(\mathbf{H}_1, \mathbf{X}_1)$ and another pair $(\mathbf{H}_2, \mathbf{X}_2)$, such that
    $\mathbf{H}_1\mathbf{X}_1 = \mathbf{H}_2\mathbf{X}_2$. Consider the householder vectors: \(\mathbf{u}_1^\textsf{T} = \left(\sqrt{{1}/{3}}, \sqrt{{2}/{3}}\right)\), \(\mathbf{u}_2^\textsf{T} = \left({1}/{\sqrt{2}}, {1}/{\sqrt{2}}\right)\). The corresponding householder matrices are \(\mathbf{H}_1 = (\mathbf{v}_1\ \mathbf{v}_2)\), where \(\mathbf{v}_1 = ({1}/{3}, {-2\sqrt{2}}/{3})^\textsf{T}\) and \(\mathbf{v}_2 = ({-2\sqrt{2}}/{3}, {-1}/{3})^\textsf{T}\); and \(\mathbf{H_2} = (\mathbf{w}_1\ \mathbf{w}_2)\), where \(\mathbf{w}_1 = (0, -1)^\textsf{T}\) and \(\mathbf{w}_2= (-1, 0)^\textsf{T}\)
    Consider the \(p^{th}\) column vector of \(\mathbf{X}_1\) and \(\mathbf{X}_2\) as \(\mathbf{X}_{1p}=(x_{11p}, x_{12p})^\textsf{T}\) and \(\mathbf{X}_{2p}=(x_{21p}, x_{22p})^\textsf{T}\). Thus, the corresponding column vectors of the \(\mathbf{Y}\) matrices are \(\mathbf{Y}_{1p} = \left(({x_{11p}-2\sqrt{2}x_{12p}})/{3}, {(-x_{12p}-2\sqrt{2}x_{11p}})/{3}\right)^\textsf{T}\) and \(\mathbf{Y}_{2p} = (-x_{22p}, -x_{21p})^\textsf{T}\). We need \(\mathbf{Y}_{1p} = \mathbf{Y}_{2p}\), thus we need \(({x_{11p}-2\sqrt{2}x_{12p}})/{3} = -x_{22p}\) and \({(-x_{12p}-2\sqrt{2}x_{11p}})/{3} = -x_{21p}\). This can be done for every column \(p\). For example, a satisfying assignment would be \(\mathbf{X}_{1p} = \left({2\sqrt2}/{3}, {1}/{3}\right)^\textsf{T}\) and \(\mathbf{X}_{2p} = (1, 0)^\textsf{T}\).

    Thus, the solution is not unique to sign and permutation. Note that even for
    Theorem 3, multiple $\mathbf{H}, \mathbf{X}$ pairs could give the
    same $\mathbf{Y}$ for a given column. However, we rely on the fact that entries of
    the column vectors of $\mathbf{X}$ are constrained to be 0 or 1 only.
    This would ensure that up to permutation, only $O(n)$ such possibilities existed.
    However, without such a restriction here, infinite solutions are possible and thus, recovery
    is impossible.

\end{proof}

\subsection{Proof of Theorem \ref{thm:zero_error}}

\begin{proof}
    We try all possible combinations of binary vectors for the columns of $\mathbf{X}$- thus, we effectively check all possible cases.
    If the equations are consistent, the $\mathbf{u}$ vector obtained from the first column of $\mathbf{X}$ will match with the $\mathbf{u}$ vector obtained from the second column of $\mathbf{X}$.
    To identify the $\mathbf{u}$ vector, we need to solve the n simultaneous equations
    \(   \mathbf{Y}_{ij} = \sum_{k=1}^{n} \mathbf{H}_{ik}\mathbf{X}_{kj}\) resulting in
    \begin{equation}
        \mathbf{Y}_{ij} = \sum_{k=1}^{n} \left(\delta_{ik}-2u_iu_k\right)\mathbf{X}_{kj} \quad \forall i \in [n] \label{2*}
    \end{equation}

    Here $\delta_{ik}$ is the standard indicator for $j=k$. For any "guess" of the first column of $\mathbf{X}$, we can solve the above $n$ equations to get the $\mathbf{u}$ vector.
    We will get $2^n$ such $\mathbf{u}$ vectors for the first and the second column. If any of the $\mathbf{u}$ vectors obtained
    from the first column of $\mathbf{X}$ match with that of a $\mathbf{u}$ vector obtained from the second column of
    $\mathbf{X}$, then we have found the correct $\mathbf{u}$ vector. The goal is to show that there will be at most one vector $\mathbf{u}$ that works for both columns of $\mathbf{X}$. We prove this by contradiction. Assume that $(\mathbf{X}_1, \mathbf{u}_1)$ and $(\mathbf{X}_2, \mathbf{u}_2)$ satisfy \eqref{2*} (let the corresponding householder matrices be $\mathbf{H}_1$ and $\mathbf{H}_2$ respectively).
    Define the sets (of non-supports) $S_{lj}=:\{k: \mathbf{X}_{lkj}=0\}$ for $j,l \in \{1,2\}$: this is the location of zeroes in the $j^{th}$ column of the $l^{th}$ solution.
    From \eqref{2*}, we have \(\sum_{ k \notin S_{1j}}(\delta_{ik}-2u_{1i}u_{1k}) = \sum_{k \notin S_{2j}} (\delta_{ik}-2u_{2i}u_{2k})  \).  \\

    Define \(c_{lj} = \sum_{i \in S_{lj}} u_{li} \) for $l,j \in \{1,2\}$ as the sum of the entries in the $l^{th}$ solution for $\mathbf{u}$ at locations where the corresponding coefficient matrix is zero in the $j^{th}$ column. Likewise, let $c_{l} = \sum u_{li}$ be the sum of all entries in the $l^{th}$ solution for $\mathbf{u}$. (Thus, define $c_{s1}=c_{1j}$ and $c_{s2}=c_{2j}$). For notational simplicity, let $c_1-c_{s1}=\delta_1$ and $c_2-c_{s2}=\delta_2$.
    Now writing \eqref{2*} for $i$ in each of the sets $T\cap S$ where $T \in \{S_{1j}, S_{1j}^c \}$ and $S \in \{S_{2j}, S_{2j}^c \}$. Define sets $P_1,P_2,P_3,P_4$ corresponding to the four cases: i \( \notin S_{1j},\  i \notin S_{2j}; \ i \notin S_{1j}, \ i \in S_{2j}; \ i \in S_{1j}, \ i \in S_{2j}; \  i \in S_{1j}, \ i \notin S_{2j}    \).

    \begin{Lemma}
        Following the notation from the previous paragraph, in case there are two solutions to \eqref{2*}, we can obtain the second solution from the first as
        \(
        u_{2i}  = (\delta_1/\delta_2) u_{1i} \ i \in P_1                                         ; \
        u_{2i}  = ({\delta_1}/{\delta_2})u_{1i} - \left({1}/{2 \delta_2}\right) \ i \in P_2               ; \
        u_{2i}  = ({\delta_1}/{\delta_2})u_{1i} \ i \in P_3                                               ; \
        u_{2i}  = ({\delta_1}/{\delta_2)}u_{1i} + \left({1}/{2 \delta_2}\right)  i \in P_4
        \)
        \label{lem:dict_equal}
    \end{Lemma}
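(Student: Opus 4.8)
The plan is to obtain the four relations by reading off \eqref{2*} for the single shared column $j$ and using that the entries of $\mathbf{X}$ are $0/1$. Fix that column $j$. For the $l$-th solution ($l \in \{1,2\}$), expanding $\mathbf{H}_l = \mathbf{I} - 2\mathbf{u}_l\mathbf{u}_l^\textsf{T}$ in \eqref{2*} gives, for each coordinate $i$,
\[
    \mathbf{Y}_{ij} = \mathbf{X}_{lij} - 2u_{li}\sum_{k}u_{lk}\mathbf{X}_{lkj} = \mathbf{X}_{lij} - 2u_{li}\sum_{k \notin S_{lj}}u_{lk},
\]
and the remaining sum is exactly $c_l - c_{sl} = \delta_l$. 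Hence \eqref{2*} collapses to the scalar identity $\mathbf{X}_{lij} - 2\delta_l u_{li} = \mathbf{Y}_{ij}$ for all $i$ and $l$. Equating the two solutions eliminates $\mathbf{Y}_{ij}$ and leaves
\[
    \mathbf{X}_{1ij} - 2\delta_1 u_{1i} = \mathbf{X}_{2ij} - 2\delta_2 u_{2i}, \qquad i \in [n].
\]

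Next I would split $[n]$ into $P_1, P_2, P_3, P_4$ exactly as stated — by whether $i$ lies outside or inside each of $S_{1j}$ and $S_{2j}$ — which is the same as fixing the pair of $0/1$ values $(\mathbf{X}_{1ij}, \mathbf{X}_{2ij})$ to $(1,1), (1,0), (0,0), (0,1)$ respectively. On each $P_r$ the two indicator terms are constants, so the displayed identity becomes a single linear equation relating $u_{1i}$ and $u_{2i}$: on $P_1$ and $P_3$ the constant terms cancel and one gets $\delta_1 u_{1i} = \delta_2 u_{2i}$; on $P_2$ one gets $1 - 2\delta_1 u_{1i} = -2\delta_2 u_{2i}$; on $P_4$ one gets $-2\delta_1 u_{1i} = 1 - 2\delta_2 u_{2i}$. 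Solving each for $u_{2i}$ (dividing by $2\delta_2$) produces precisely the four formulas in the statement. This step is a short, deterministic case check; the only thing to watch is not transposing which of the two indicators equals $0$ versus $1$ in $P_2$ and $P_4$.

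The one genuine point requiring attention — and the step I would flag as the main obstacle — is the division by $\delta_2$ (and, for the symmetric reformulation, by $\delta_1$). I would argue that the lemma is to be read in the non-degenerate regime $\delta_1, \delta_2 \neq 0$: when $\delta_2 = 0$, the scalar identity forces $\mathbf{Y}_{ij} = \mathbf{X}_{2ij} \in \{0,1\}$, i.e. the $j$-th column of $\mathbf{Y}$ is itself binary and $\mathbf{H}_2$ acts trivially on it, so this case is either excluded by the hypothesis that two genuinely distinct solutions exist or must be disposed of as a separate, easily handled branch. I would verify, before appealing to Lemma \ref{lem:dict_equal} in the proof of Theorem \ref{thm:zero_error}, that this degenerate branch cannot produce a spurious second solution once the constraints from both columns are combined; modulo that bookkeeping, the rest is the routine scalar manipulation above.
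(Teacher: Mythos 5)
Your proof is correct and follows essentially the same route as the paper's: reduce \eqref{2*} to the scalar identity $\mathbf{X}_{lij} - 2\delta_l u_{li} = \mathbf{Y}_{ij}$, split indices into $P_1,\dots,P_4$ by the $0/1$ pattern of the two coefficient columns, and solve the resulting linear equations for $u_{2i}$. Your explicit treatment of the $\delta_2 = 0$ degeneracy is slightly more careful than the paper, which simply states ``we assume that we do not divide by 0 in any case.''
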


    \begin{proof}
        We show that the $u_{2i}$'s can be expressed in terms of the $u_{1i}$'s for each case. Here, $i \in [n]$ indicates the $i^{th}$ entry of the $\mathbf{u}$ vector,      corresponding to the $i^{th}$ row of $\mathbf{X}$. This leads to the following equality constraints:
        \begin{align*}
            \sum_{i \notin S_{1j}} \left(\delta_{ik}-2u_{1i}u_{1k}\right) & = \sum_{i \notin S_{2j}} \left(\delta_{ik}-2u_{2i}u_{2k}\right) \\
            \sum_{i \notin S_{1j}} \left(\delta_{ik}-2u_{1i}u_{1k}\right) & = \sum_{i \in S_{2j}} \left(\delta_{ik}-2u_{2i}u_{2k}\right)    \\
            \sum_{i \in S_{1j}} \left(\delta_{ik}-2u_{1i}u_{1k}\right)    & = \sum_{i \in S_{2j}} \left(\delta_{ik}-2u_{2i}u_{2k}\right)    \\
            \sum_{i \in S_{1j}} \left(\delta_{ik}-2u_{1i}u_{1k}\right)    & = \sum_{i \notin S_{2j}} \left(\delta_{ik}-2u_{2i}u_{2k}\right)
        \end{align*}
        (Note that the summations are over $k$). These can be simplified as:
        \(
        1-2u_{1i}\delta_1 = 1-2u_{2i}\delta_2 \ \ \ i \in P_1 ; \ \
        1-2u_{1i}\delta_1 = -2u_{2i}\delta_2 \ \ \ i \in P_2 ;    \ \
        -2u_{1i}\delta_1  = -2u_{2i}\delta_2 \ \ \ i \in P_3 ;       \ \
        -2u_{1i}\delta_1  = 1-2u_{2i}\delta_2 \ \ \ i \in P_4
        \)
        On rearranging the terms, we get the required result. We assume that we do not divide by 0 in any case.
    \end{proof}

    Now, consider the following Lemma:
    \begin{Lemma}
        Following the notation above,
        \begin{enumerate}
            \item $\lvert P_2 \rvert = \lvert P_4 \rvert$,
            \item        \( (\delta_2)/(\delta_1) = c_1/c_2, \)
            \item \(u_{1i}/u_{2i} = c_1/c_2 \) for  $i\in P_1\cup P_3,$
            \item \(\sum_{i \in P_2 \cup P_4} u_{1i}/\sum_{i \in P_2 \cup P_4} u_{2i}  = c_1/c_2. \)
        \end{enumerate}
        \label{lem:3}
    \end{Lemma}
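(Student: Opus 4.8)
The plan is to derive all four parts from the per-row equalities already recorded in Lemma~\ref{lem:dict_equal}, supplemented by two global ingredients: (a) $\mathbf{H}_1$ and $\mathbf{H}_2$ are orthogonal, so the relevant columns of $\mathbf{X}_1$, $\mathbf{X}_2$ and $\mathbf{Y}$ all share the same Euclidean norm; and (b) summing the row equations over $i\in[n]$. Recall that for the column in question, equation~\eqref{2*}, after carrying out the sum over $k$, says: the left side equals $1$ if $i\notin S_{1j}$ and $0$ otherwise, minus $2u_{1i}\delta_1$, with the analogous statement (subscript $2$) on the right. This is exactly the case split over $P_1,\dots,P_4$ underlying Lemma~\ref{lem:dict_equal}.

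First I would prove part~1. Since $\mathbf{H}_1,\mathbf{H}_2$ are orthogonal, $\lVert \mathbf{X}_{1j}\rVert_2=\lVert \mathbf{Y}_{\cdot j}\rVert_2=\lVert \mathbf{X}_{2j}\rVert_2$, where $\mathbf{X}_{lj}$ denotes the relevant column of $\mathbf{X}_l$. Because the entries of $\mathbf{X}_l$ are binary, $\lVert \mathbf{X}_{lj}\rVert_2^2$ equals the size of the support of that column, i.e. $\lvert S_{lj}^c\rvert$. But $S_{1j}^c=P_1\cup P_2$ and $S_{2j}^c=P_1\cup P_4$ by definition of the $P_k$, so $\lvert P_1\rvert+\lvert P_2\rvert=\lvert P_1\rvert+\lvert P_4\rvert$, hence $\lvert P_2\rvert=\lvert P_4\rvert$.

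Next, part~2: sum the row equation over all $i\in[n]$. The indicator part contributes $\lvert S_{1j}^c\rvert$ on the left and $\lvert S_{2j}^c\rvert$ on the right, which are equal by part~1 and therefore cancel; the remaining terms give $-2\delta_1\sum_i u_{1i}=-2\delta_2\sum_i u_{2i}$, i.e. $\delta_1 c_1=\delta_2 c_2$, which is part~2. Part~3 is then immediate: for $i\in P_1\cup P_3$ Lemma~\ref{lem:dict_equal} gives $u_{1i}\delta_1=u_{2i}\delta_2$, so $u_{1i}/u_{2i}=\delta_2/\delta_1=c_1/c_2$ by part~2. For part~4, add the $P_2$ and $P_4$ relations of Lemma~\ref{lem:dict_equal}: the constants $-1/(2\delta_2)$ and $+1/(2\delta_2)$ occur $\lvert P_2\rvert$ and $\lvert P_4\rvert$ times respectively and cancel by part~1, leaving $\sum_{i\in P_2\cup P_4}u_{2i}=(\delta_1/\delta_2)\sum_{i\in P_2\cup P_4}u_{1i}$; dividing yields the ratio $\delta_2/\delta_1=c_1/c_2$.

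The arithmetic here is light; the only genuine content is the realization that part~1 cannot be extracted from the algebraic row relations alone and must be imported from the norm-preservation property of Householder (orthogonal) matrices together with the binary structure of $\mathbf{X}$ — once $\lvert P_2\rvert=\lvert P_4\rvert$ is available, parts~2--4 are bookkeeping. The remaining thing to watch is division by zero: the manipulations implicitly require $\delta_1,\delta_2,c_1,c_2\neq 0$, which is the standing assumption flagged in the proof of Lemma~\ref{lem:dict_equal}; the degenerate cases would need to be addressed separately (or shown not to arise) where this lemma is subsequently applied.
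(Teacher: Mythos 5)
Your proof is correct, but it takes a genuinely shorter route than the paper's on parts 1 and 4. For part 1, you import $\lvert S_{1j}^c\rvert=\lvert S_{2j}^c\rvert$ directly from norm preservation of the orthogonal maps $\mathbf{H}_1,\mathbf{H}_2$ together with the binary structure of $\mathbf{X}$; the paper actually remarks that this shortcut is available ("this could have been obtained directly by taking the $l_2$ norm column-wise on both sides of $\mathbf{H}_1\mathbf{X}_1=\mathbf{H}_2\mathbf{X}_2$") but instead derives $\lvert P_2\rvert=\lvert P_4\rvert$ the long way, by substituting the Lemma~\ref{lem:dict_equal} relations into the unit-norm constraint $\sum_i u_{2i}^2=1$ and into the partial sums $\sum_{P_2\cup P_3}u_{2i}=c_{s2}$, $\sum_{P_3\cup P_4}u_{1i}=c_{s1}$, then combining the resulting identities (its Equations~\eqref{eq:3}--\eqref{eq:9}). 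Your part 2 (summing the row relation over all $i$ and cancelling the indicator contributions via part 1) and part 3 are essentially the paper's arguments. For part 4 the paper again routes through its Equations~\eqref{eq:7} and \eqref{eq:9} plus part 3, whereas you simply sum the $P_2$ and $P_4$ relations of Lemma~\ref{lem:dict_equal} and let the $\pm 1/(2\delta_2)$ constants cancel by part 1 --- this is cleaner and, contrary to the paper's claim that the later parts "are not as straightforward to obtain" without its longer analysis, shows they do follow by light bookkeeping once part 1 is in hand. Your flagged caveat about $\delta_1,\delta_2,c_1,c_2\neq 0$ matches the standing non-degeneracy assumption in the paper. One cosmetic note: part 1 does require an input beyond the raw row relations, but the paper's chosen input (unit norm of $\mathbf{u}_1,\mathbf{u}_2$) and yours (column-norm preservation) are equivalent in origin, both stemming from orthogonality of the Householder factors.
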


    \begin{proof}
        First, we use the fact that any solution to $\mathbf{u}$ must be unit norm.
        \begin{equation}
            \begin{aligned}
                \sum_{i=1}^{n} u_{2i}^2 & = 1 \label{eq:2} \\
            \end{aligned}
        \end{equation}
        Substituting the equations from Lemma \ref{lem:dict_equal} into \eqref{eq:2}, we get
        \(
        \sum_{P_1}
        \left( (\delta_1 / \delta_2) u_{1i} \right)^2 +
        \sum_{P_2}
        \left( (\delta_1 / \delta_2) u_{1i} -  (1 / 2 \delta_2) \right)^2
        + \sum_{P_3}
        \left( (\delta_1 / \delta_2) u_{1i} \right)^2 +
        \sum_{P_4}
        \left( (\delta_1 / \delta_2) u_{1i} +  (1 / 2 \delta_2) \right)^2 = 1
        \). By grouping the terms appropriately, we get \\
        \(
        (\delta_1 / \delta_2)^2
        \left( \sum_{\cup_{i=1}^{4} P_i} u_{1i}^2 \right)
        + \sum_{P_2 \cup P_4}  (1 / 2 \delta_2)^2
        - \sum_{P_2} (\delta_1 / \delta_2^2) u_{1i}
        + \sum_{P_4} (\delta_1 / \delta_2^2) u_{1i} = 1
        \). On using the unit norm condition on $\mathbf{u_1}$:
        \(
        (\delta_1 / \delta_2)^2
        + \sum_{P_2 \cup P_4} (1 / 2 \delta_2)^2
        - \sum_{P_2} (\delta_1 / \delta_2^2) u_{1i}
        + \sum_{P_4} (\delta_1 / \delta_2^2) u_{1i} = 1
        \). On cross multiplying and expanding:
        \(
        \delta_1^2
        + {(\lvert P_2 \rvert + \lvert P_4 \rvert)}/{4}
        - \sum_{P_2} \delta_1 u_{1i}  + \sum_{P_4} \delta_1 u_{1i}
        = \delta_2^2
        \). Which, on simplifying, leads to:
        \begin{equation}
            \begin{aligned}
                \delta_2^2 - \delta_1^2
                = {(\lvert P_2 \rvert + \lvert P_4 \rvert)}/{4}
                - \sum_{P_2} \left( \delta_1 \right) u_{1i}
                + \sum_{P_4} \left( \delta_1 \right) u_{1i}.
            \end{aligned}
            \label{eq:3}
        \end{equation}

        The equations derived above serve as the basis for the proof of the Lemma. We now proceed to prove the individual parts of the Lemma.

        \begin{enumerate}
            \item
                  Consider the equation obtained from the sums of entries of the
                  $\mathbf{u}$ vectors.
                  \begin{align*}
                      \sum_{i=1}^{n} u_{2i} & = c_2
                  \end{align*}
                  Plugging in the expressions from Lemma \ref{lem:dict_equal}, we get
                  \(
                  \sum_{P_1} \left( \delta_1 / \delta_2 u_{1i} \right)
                  + \sum_{P_2} \left( \delta_1 / \delta_2 u_{1i} - \left( 1 / (2 \delta_2) \right) \right)
                  + \sum_{P_3} \left( \delta_1 / \delta_2 u_{1i} \right)
                  + \sum_{P_4} \left( \delta_1 / \delta_2 u_{1i} + \left( 1 / (2 \delta_2) \right) \right) = c_2.
                  \) Thus,
                  \(
                  (\delta_1 / \delta_2) \left( \sum_{\cup_{i=1}^{4} P_i} u_{1i} \right)
                  - \sum_{P_2} (1 / 2 \delta_2) + \sum_{P_4}  (1 / 2 \delta_2) = c_2
                  \). This gives
                  \(
                  c_1 \left( \delta_1 / \delta_2 \right)
                  - \sum_{P_2}  (1 / 2 \delta_2) + \sum_{P_4} (1 / 2 \delta_2) = c_2
                  \). On cross-multiplying, we get
                  \(
                  c_1\left(\delta_1 \right)
                  - \sum_{P_2}{1}/{2} + \sum_{P_4}{1}/{2} = c_2\delta_2
                  \). On simplifying, we have,
                  \begin{equation}
                      \begin{aligned}
                          - \sum_{P_2}{1}/{2} +
                          \sum_{P_4}{1}/{2} & = c_2\delta_2 - c_1 \delta_1 \label{eq:4}
                      \end{aligned}
                  \end{equation}
                  We then consider sums over $P2$ and $P3$:
                  \(
                  \sum_{P2 \cup P3} u_{2i} = c_{s2}
                  \). This gives
                  \(
                  \sum_{P2}u_{2i} + \sum_{P3}u_{2i} = c_{s2}
                  \). Substituting the expressions from Lemma \ref{lem:dict_equal}, we get
                  \(
                  \sum_{P2 \cup P3} \left({\delta_1}/{\delta_2}\right)u_{1i}
                  - \sum_{P_2}\left({1}/{2 \delta_2}\right) = c_{s2}
                  \). On cross-multiplying and rearranging,
                  \begin{equation}
                      \begin{aligned}
                          \sum_{P2 \cup P3} \delta_1u_{1i}
                          - \sum_{P_2}{1}/{2} & = c_{s2}\delta_2 \label{eq:5}
                      \end{aligned}
                  \end{equation}
                  Correspondingly, consider the sums over $P3$ and $P4$:
                  \(
                  \sum_{P3 \cup P4} u_{1i} = c_{s1}
                  \). Decomposing similar to the previous case,
                  \(
                  \sum_{P2}u_{1i} + \sum_{P3}u_{1i} = c_{s1}
                  \). Substituting the expressions from Lemma \ref{lem:dict_equal}, \
                  \(
                  \sum_{P3 \cup P4} \left({\delta_2}/{\delta_1}\right)u_{2i}
                  - \sum_{P_4}\left({1}/{2 \delta_1}\right) = c_{s1}
                  \). Simplifying and rearranging,
                  \begin{equation}
                      \begin{aligned}
                          \sum_{P3 \cup P4} \left(\delta_2\right)u_{2i}
                          - \sum_{P_4}{1}/{2} & = c_{s1}\delta_1 \label{eq:6}
                      \end{aligned}
                  \end{equation}

                  Thus, we get (from Equations \ref{eq:5} and \ref{eq:6}):
                  \begin{equation}
                      \begin{aligned}
                          c_{s2}\delta_2 - c_{s1}\delta_1 & =  \sum_{P2} \delta_1u_{1i}
                          -  \sum_{P4} \delta_2u_{2i} - \sum_{P_2}{1}/{2} +
                          \sum_{P_4}{1}/{2} \label{eq:7}
                      \end{aligned}
                  \end{equation}

                  Furthermore (from Equations \ref{eq:4} and \ref{eq:7}):
                  \(
                  c_2\delta_2 - c_1\delta_1-(c_{s2}\delta_2 - c_{s1}\delta_1)
                  = - \left(\sum_{P2} \delta_1u_{1i} -  \sum_{P4} \delta_2u_{2i}\right)
                  \). This gives us:
                  \begin{equation}
                      \begin{aligned}
                          \delta_2^2-\delta_1^2 & = - \sum_{P2} \delta_1u_{1i} +  \sum_{P4} \delta_2u_{2i} \label{eq:8}
                      \end{aligned}
                  \end{equation}

                  Using the above and Equation $\ref{eq:3}$, we get:
                  \(
                  {(\lvert P_2 \rvert + \lvert P_4 \rvert)}/{4} = \sum_{P4} \delta_2u_{2i} - \sum_{P4} \delta_1u_{1i}
                  \). This gives us:
                  \(
                  \lvert P_2 \rvert + \lvert P_4 \rvert = 2\sum_{P4} -2u_{1i}\delta_1
                  + 2u_{2i}\delta_2 \). Or, \(
                  \lvert P_2 \rvert + \lvert P_4 \rvert = 2\sum_{P4} 1-2u_{2i}\delta_2+2u_{2i}\delta_2 \). Hence, \(
                  \lvert P_2 \rvert + \lvert P_4 \rvert = 2\sum_{P4} 1
                  \)
                  Thus, we have
                  \begin{equation}
                      \begin{aligned}
                          \lvert P_2 \rvert - \lvert P_4 \rvert & =  0 \label{eq:9}
                      \end{aligned}
                  \end{equation}

                  This concludes the proof of the first part of the Lemma. Note that this could have been obtained directly
                  by taking the $l_2$ norm (column-wise) on both sides of the equation $\mathbf{H}_1\mathbf{X}_1=\mathbf{H}_2\mathbf{X}_2$. However, the following results
                  are not as straightforward to obtain without the analysis of the underlying structure of the problem.

            \item
                  Using Equations $\ref{eq:9}$ and $\ref{eq:4}$, we have:
                  \(
                  0 = c_2\delta_2 - c_1\left(\delta_1\right)
                  \)
                  On rearranging the terms, we get
                  \begin{equation}
                      \begin{aligned}
                          {c_1}/{c_2} & = {\delta_2}/{\delta_1} \label{eq:10}
                      \end{aligned}
                  \end{equation}

            \item
                  Using Lemma $\ref{lem:dict_equal}$, and substituting the result 2. from Lemma $\ref{lem:3}$, we get:
                  \begin{equation}
                      \begin{aligned}
                          {u_{1i}}/{u_{2i}} & = {c_1}/{c_2} \ \ \ \text{for}  \ \ \ P_1 \cup P_3 \label{eq:13}
                      \end{aligned}
                  \end{equation}

            \item
                  Using Equations $\ref{eq:7} $ and $ \ref{eq:9}$, we get:
                  \(
                  c_{s2}\delta_2 - c_{s1}\delta_1 =  \sum_{P2} \delta_1u_{1i} - \sum_{P4} \delta_2u_{2i}
                  \). Dividing the equation by $\delta_1$, we get
                  \(
                  c_{s2}({\delta_2}/{\delta_1})-c_{s1} =  \sum_{P2} u_{1i} - \sum_{P4} ({\delta_2}/{\delta_1})u_{2i}
                  \)
                  Using result 2. from Lemma $\ref{lem:3}$, we get:
                  \(
                  c_{s2}({c_1}/{c_2})-c_{s1} = \sum_{P2} u_{1i} - \sum_{P4} ({c_1}/{c_2})u_{2i}
                  \)
                  Thus, we have:
                  \(
                  ({c_1}/{c_2})\sum_{P2 \cup P3 \cup P4} u_{2i} = \sum_{P2 \cup P3 \cup P4} u_{1i}
                  \)
                  Or equivalently,
                  \begin{equation}
                      \begin{aligned}
                          ({c_1}/{c_2}) & = \frac{\sum_{P2 \cup P3 \cup P4} u_{1i}}{\sum_{P2 \cup P3 \cup P4} u_{2i}} \label{eq:11}
                      \end{aligned}
                  \end{equation}

                  Using Equations $\ref{eq:11} $ and $ \ref{eq:13}$:
                  \(
                  \sum_{P2 \cup P3 \cup P4} u_{1i} = \sum_{P2 \cup P3 \cup P4} u_{2i} ({c_1}/{c_2}) \\
                  \)
                  Separating the terms, we get:
                  \(
                  \sum_{P2 \cup P4} u_{1i} + \sum_{P3} u_{1i} = \sum_{P2 \cup P4} u_{2i}({c_1}/{c_2}) + \sum_{P3} u_{2i} ({c_1}/{c_2}) \\
                  \)
                  Using result 3. from Lemma $\ref{lem:3}$, we get:
                  \(
                  \sum_{P2 \cup P4} u_{1i} = \sum_{P2 \cup P4} u_{2i} ({c_1}/{c_2})
                  \)
                  On simplifying, we have the result:
                  \begin{equation}
                      \begin{aligned}
                          \frac{\sum_{P2 \cup P4} u_{1i}}{\sum_{P2 \cup P4} u_{2i}} & = ({c_1}/{c_2})
                      \end{aligned}
                  \end{equation}

        \end{enumerate}
    \end{proof}

    According to result $1.$ in Lemma \ref{lem:3}, the two matrices are only different due to permutation. Now, we consider multiple columns. We know that every
    column of $\mathbf{X}$ has a permutation of 0's and 1's corresponding to the
    ground truth $\mathbf{u}$ vector. The corresponding column vector in
    $\mathbf{Y}$ is different for different columns. However, the ground truth $\mathbf{u}$ vector
    remains the same throughout. Thus, the $\mathbf{u}$ vector generated from one of the permutations
    will certainly match that generated in the previous columns. An error is caused
    when the same "incorrect" $\mathbf{u}$ vector is generated for every column.
    Consider Lemma \ref{lem:dict_equal}. Say we know the ground truth $\mathbf{u}$ vector, $\mathbf{u_1}$. Using the condition on unit norm of $\mathbf{u}$, we also have: \((\delta_2) = \pm ((\delta_1)^2+(\delta_1)\left(\sum_{P_4}u_{1i} - \sum_{P_2}u_{1i}\right)+({\lvert P_2 \rvert + \lvert P_4 \rvert})/{4})^{1/2}.\)
    For this value to always be defined, \((\delta_1)^2+(\delta_1)\left(\sum_{P_4}u_{1i} - \sum_{P_2}u_{1i}\right)+({\lvert P_2 \rvert + \lvert P_4 \rvert})/{4} \geq 0 \ \forall \mathbf{u_1}\). \\
    Thus, the discriminant of the above quadratic must always be non-positive. Consequently, we have, \(\Delta  = \left(\sum_{P_4}u_{1i} - \sum_{P_2}u_{1i}\right)^2 - 4\left({\lvert P_2 \rvert + \lvert P_4 \rvert}/{4}\right) \). Since \(\max \left(\sum_{P_4}u_{1i} - \sum_{P_2}u_{1i}\right)  = ({\lvert P_2 \rvert + \lvert P_4 \rvert})^{1/2}\), we have that \(\Delta \leq 0\) for all \(\mathbf{u_1}.\)

    Thus, when given a $\mathbf{u_1}$ vector, the corresponding possible $\mathbf{u_2}$ vector can be found.
    Only 2 of these can exist for a given ($\mathbf{u_1}$ vector, column vector of $\mathbf{X_1}$) pair, and only one can exist up to sign.
    Now, consider the
    second column. We need to check if the same $\mathbf{u_2}$ vector obtained from the first column can satisfy the equations
    corresponding to the second column. Note that there is a conflict
    iff a $\mathbf{u_2}$ vector generated from the second column is
    exactly the same as that generated from the first column. This is because,
    when we solve the equations to find the ground truth $\mathbf{u_1}$ vector, the result is exactly
    the same from both columns, since this is how the matrix $\mathbf{Y}$ was generated.
    There is no variation even in permutation or sign. Thus, we check if this is possible. Assume that the
    $\mathbf{u_2}$ vector generated from the second column is the same as that generated from the first column. This implies that both
    $c_1$ and $c_2$ are the same.

    Consider the equations from Lemma $\ref{lem:dict_equal}$.
    Using Lemma $\ref{lem:3}$, these are equal to:
    \(
    u_{1i}  = u_{2i}c'\text{ for }i \in P_1 \cup P_3\), \( u_{1i} = c''+u_{2i}c' \text{ for } i \in P_2 \) and \(u_{1i} = -c''+u_{2i}c' \text{ for } i \in P_4\) where $c'=c_1/c_2$, $c''= 1/(2\delta_1)$.
    Note that since $\mathbf{u_1}$ is the same, $c_1$ is the same in the equations above as
    that from the first column. Thus, if $\mathbf{u_2}$ has to be the same
    , then $\delta_1$ must be the same for the ground $\mathbf{u_1}$ vector generated from the first column.
    Thus, $c_{s1}$ must also be the same. Not only this, but $\lvert P_2 \rvert$ and
    $\lvert P_4 \rvert$ must be the same because there is a one-to-one relationship
    between the known ground truth $\mathbf{u_1}$ vector and the $\mathbf{u_2}$ vector generated from any column.
    If $\lvert P_2 \rvert $ is different, the number of entries that are calculated using the above equation
    for $i \in P_2$ will change. Thus, the $\mathbf{u_2}$ vector generated will be different. The only way it won't change is
    if ${1}/\delta_1$ is $0$. But this is not possible.
    Therefore, the only way of getting an identical $\mathbf{u}_2$ vector is if the ground truth $\mathbf{X}$ column vector for the
    first and second columns are identical. But this is a contradiction since we assume them to be different. Thus, we will recover the ground truth $\mathbf{u}_1$ vector uniquely by using only 2 columns.
\end{proof}

\begin{table*}[h!]
    \centering
    \scriptsize
    \renewcommand{\arraystretch}{1.5}
    \begin{tabular}{|c|c|c|c|c|}
        \hline
        Possible $\mathbf{x}$  & Column 1: $[u_1,u_2,u_3]$                        & Valid $\left(\lVert u \rVert =1 \right)$?                 & Column 2: $[u_1,u_2,u_3]$                               & Valid $\left(\lVert u \rVert =1 \right)$? \\
        \hline
        $[0, 0, 0]^\textsf{T}$ & No solution                                      & No                                                        & No solution                                             & No                                        \\
        \hline
        $[0, 0, 1]^\textsf{T}$ & \([1/\sqrt{42}, -1/\sqrt{42}, \sqrt{7/6}]\)
                               & No                                               & \fcolorbox{red}{white}{\([2/3, 1/3, 2/3]\)}               & Yes                                                                                                 \\
        \hline
        $[0, 1, 0]^\textsf{T}$ & \([1/(2\sqrt{3}), 1/\sqrt{3}, 2/\sqrt{3}]\)
                               & No                                               & \([4\sqrt{2/13}/3, \sqrt{13/2}/3, 1/(3\sqrt{26})]\)       & No                                                                                                  \\
        \hline
        $[0, 1, 1]^\textsf{T}$ & \([1/(3\sqrt{6}), \sqrt{2/3}/3, 7/(3\sqrt{6})]\)
                               & Yes                                              & \([(4\sqrt{2/21})/3, 13/(3\sqrt{42}), (4\sqrt{2/21})/3]\) & No                                                                                                  \\
        \hline
        $[1, 0, 0]^\textsf{T}$ & \([\sqrt{2/3}, -1/(2\sqrt{6}), \sqrt{2/3}]\)
                               & No                                               & \([(\sqrt{17/2})/3, (2\sqrt{2/17})/3, 1/(3\sqrt{34})]\)
                               & No                                                                                                                                                                                                                 \\
        \hline
        $[1, 0, 1]^\textsf{T}$ & \([2\sqrt{2/33}, -1/\sqrt{66}, 7/\sqrt{66}]\)
                               & Yes                                              & \([17/(15\sqrt{2}), (2\sqrt{2})/15, (4\sqrt{2})/15]\)
                               & No                                                                                                                                                                                                                 \\
        \hline
        $[1, 1, 0]^\textsf{T}$ & \fcolorbox{red}{white}{\([2/3, 1/3, 2/3]\)}      & Yes                                                       & \([17/(6\sqrt{15}), 13/(6\sqrt{15}), -1/(6\sqrt{15})]\)
                               & No                                                                                                                                                                                                                 \\
        \hline
        $[1, 1, 1]^\textsf{T}$ & \([2\sqrt{2/39}, \sqrt{2/39}, 7/\sqrt{78}]\)
                               & No                                               & \([17/(6\sqrt{19}), 13/(6\sqrt{19}), 4/(3\sqrt{19})]\)
                               & No                                                                                                                                                                                                                 \\
        \hline
    \end{tabular}
    \caption{}
    \label{tab:illustration}
\end{table*}

\subsection{Illustration of Theorem \ref{thm:zero_error}}
Consider an arbitrary $3 \times 3$ householder matrix. Say, the householder vector $\mathbf{u}$ is given by: \( \mathbf{u} = [2/3, 1/3, 2/3]^\textsf{T}\). $\mathbf{H}=\mathbf{I}-2\mathbf{uu}^\textsf{T}$. Thus, the corresponding data matrix, householder matrix, and arbitrarily chosen binary matrix $\mathbf{X}$ are as follows:
\[
    \underbrace{
        \begin{bmatrix}
            -1/3 & -8/9 & \cdots \\
            1/3  & -4/9 & \cdots \\
            -4/3 & 1/9  & \cdots
        \end{bmatrix}
    }_{\mathbf{Y}}
    =
    \underbrace{
        \begin{bmatrix}
            1/9  & -4/9 & -8/9 \\
            -4/9 & 7/9  & -4/9 \\
            -8/9 & -4/9 & 1/9
        \end{bmatrix}
    }_{\mathbf{H}}
    \underbrace{
        \begin{bmatrix}
            1 & 0 & \cdots \\
            1 & 0 & \cdots \\
            0 & 1 & \cdots
        \end{bmatrix}
    }_{\mathbf{X}}
\]

On applying the algorithm to $\mathbf{Y}$, we get the results from Table \ref{tab:illustration}. That is, we express the unknown householder matrix in its general form:
\[
    \mathbf{H} = \begin{bmatrix}
        1 - 2u_1^2 & -2u_1u_2   & -2u_1u_3   \\
        -2u_1u_2   & 1 - 2u_2^2 & -2u_2u_3   \\
        -2u_1u_3   & -2u_2u_3   & 1 - 2u_3^2
    \end{bmatrix}
\]
Using this, we solve the equations obtained for all cases. The only common solution has been highlighted in the table (see Table \ref{tab:illustration}). Note originally, we don't have either $\mathbf{H}$ or $\mathbf{X}$. The algorithm gives us $\mathbf{H}$, and $\mathbf{X}$ can be found as the column of $\mathbf{X}$ used to obtain the ground truth $\mathbf{H}$.

\section{Discussion and Future Work}
We note that approximations on $\mathcal{H}_m$ may require
a large value of $m$ even for matrices like $-\mathbf{I}$:
this is because the individual Householder matrices, which
are used as building blocks have a large eigenspace
corresponding to the eigenvalue $1$. We see that $\mathbf{-I}$
can be represented as a product of $n$ Householder matrices
as in \eqref{eq:symmetric_Householder_decomposition}. In
fact, there doesn't exist a representation of $\mathbf{-I}$
as a product of a $m<n$ Householder matrices
(this can be seen as a consequence of Theorem \ref{thm:correctness}).
Since our primary interest is in computational and storage efficiency,
we may consider modifying the basic Householder unit used as building blocks.
A possible consoderation is using the following modified Householder matrices $z_1\mathbf{I}-z_2\mathbf{uu}^T $ (where $z_1, z_2$ are tunable constants) as fundamental blocks.
his will hopefully help us express a richer class of orthogonal matrices using a smaller number of fundamental blocks.

The main takeaway from Section \ref{sec:3} is that meaningful lower bounds on sample complexity for structured dictionary learning can only be obtained by restricting the dictionary learning algorithm to be polynomial time.

\end{document}